\def\tildeA{\widetilde{A}}
\def\outalg{\text{\sf\small rls-out}}
\begin{document}

\title{Revisiting Differentially Private Regression:\\
Lessons From Learning Theory and their Consequences}
\author{
Xi Wu$^{\dagger}$
\ \ 
Matthew Fredrikson$^\ddagger$
\ \ 
Wentao Wu$^{\ast}$
\ \ 
Somesh Jha$^{\dagger}$
\ \ 
Jeffrey F. Naughton$^{\dagger}$
\ \ 
\\
$^{\dagger}$University of Wisconsin-Madison,
$^{\ddagger}$Carnegie Mellon University\\
$^{\ast}$Microsoft Research\\
\{xiwu, wentaowu, jha, naughton\}@cs.wisc.edu, mfredrik@cs.cmu.edu, 
}
\maketitle

\begin{abstract}
  Private regression has received attention from both database and security
  communities. Recent work by Fredrikson et al. (USENIX Security 2014)
  analyzed the functional mechanism (Zhang et al. VLDB 2012) for training
  linear regression models over medical data. Unfortunately, they found that
  model accuracy is already unacceptable with differential privacy
  when $\varepsilon = 5$. We address this issue, presenting
  an explicit connection between differential privacy and stable
  learning theory through which a substantially better privacy/utility tradeoff
  can be obtained. Perhaps more importantly, our theory reveals that the most
  basic mechanism in differential privacy, \emph{output perturbation},
  can be used to obtain a better tradeoff for \emph{all} convex-Lipschitz-bounded
  learning tasks. Since output perturbation is simple to implement,
  it means that our approach is potentially widely applicable in practice.
  We go on to apply it on the same medical data as used by Fredrikson
  et al. Encouragingly, we achieve accurate models even for
  $\varepsilon = 0.1$. In the last part of this paper, we study the impact of
  our improved differentially private mechanisms on \emph{model inversion attacks},
  a privacy attack introduced by Fredrikson et al. We observe that the improved
  tradeoff makes the resulting differentially private model
  \emph{more susceptible} to inversion attacks. We analyze this phenomenon
  formally.
\end{abstract}


\section{Introduction}
\label{sec:intro}
Differential-private data analytics has received considerable
attention from the data management community~\cite{JYC15, LQSC12,
XZXYYW13, ZXYZW13}. This paper focuses on regression models, which are
widely used to extract valuable data patterns in ``sensitive''
domains.  For example, in personalized medicine, linear regression
models are commonly used to predict medication dosages~\cite{IWPC09}
and other useful features~\cite{CPPM12, RAFDA11}. For such scenarios,
individuals' privacy has become a major concern and learning with
differential privacy (DP) is particularly desirable.

Differentially private regression, and more generally private convex learning,
has been intensively studied in recent years by researchers from the data
management, security, and theory communities~\cite{BST14, CMS11, DJW13,
FLJLPR14, JT13, ZZXYW12}. One notable contribution is the
\emph{functional mechanism}, proposed by Zhang et al.~\cite{ZZXYW12},
which is a practical mechanism for training differentially-private
regression models.  Several research groups~\cite{APW15, WSW15, WYZ12}
have adopted the functional mechanism as a basic building block in
their study and initial empirical results are promising.

Our starting point is recent work by Fredrikson et al.~\cite{FLJLPR14}, which
used the functional mechanism to train differentially-private linear
regression models to predict doses of a medicine called \emph{warfarin} from
patients' genomic traits. Unfortunately, they found that the model accuracy was
unacceptable with $\varepsilon$-DP --- even for $\varepsilon$ as high as~$5$.
One of the main goals of this paper is to develop a suitable theory which
allows mechanisms based on simple techniques such as \emph{output perturbation}
to obtain accurate, differentially-private models at reasonable private levels
(e.g., $\varepsilon=0.1$).

To this end, we start by presenting an explicit and precise connection between
differential privacy and stability theory in computational learning.
Specifically, in the setting of learning, we show that differential privacy is
essentially another stability definition, yet it is so strong that it implies
previous stability definitions in the learning literature. Combining this
observation with some machinery from stable learning theory, we give an
analysis showing that simple mechanisms such as
output perturbation can learn \emph{every} convex-Lipschitz-bounded problem
with \emph{strong} differential privacy guarantees.

Our analysis has several advantages over previous work.  First, it
{\em relaxes} the technical conditions required by Chaudhuri et
al.~\cite{CMS11}.  In particular, we do not require the loss function
to be smooth or differentiable. Second, our analysis reveals that
output perturbation can be used to obtain a much better
privacy/utility tradeoff than the functional mechanism. Under the same
technical conditions, we achieve the \emph{same tradeoff} between
differential privacy and generalization error as that recently proved
by Bassily et al.~\cite{BST14}, while avoiding use of the exponential
mechanism~\cite{MT07} and the sophisticated sampling sub-procedure
used by Bassily et al.~\cite{BST14}. This makes our approach
widely-applicable in practical settings.

We then apply our theory to {\em linear regression models}, for which we
notice that the functional mechanism is the state-of-the-art approach being
applied in the literature~\cite{APW15, FLJLPR14, WSW15}. Our mechanisms are
\emph{regularized} versions of least squares optimization. In contrast to the
functional mechanism, where regularization is a heuristic, regularization is
crucial for our theoretical guarantee.

Regularization adds additional parameters that need to be picked
carefully in order to maintain DP.  A standard approach used in
related contexts is to employ a private parameter tuning algorithm~\cite{CMS11}
to select a set of parameters that depends on the training data.
We have implemented this approach to parameter tuning
and refer to the mechanisms using it as the {\em privately-tuned mechanisms}
in our empirical study. We also consider a different approach,
based on our proof of generalization error,
that picks the parameters in a {\em data independent} manner.

We evaluate both approaches using the same data and experimental
methodology used by Fredrikson et al.~\cite{FLJLPR14}. The results are
encouraging --- both approaches produce substantially more accurate
models than the functional mechanism of Zhang et al.~\cite{ZZXYW12}.
Specifically, two tuned mechanisms obtain accurate models with $\varepsilon$-DP
for $\varepsilon = 0.3$.
Interestingly, the data-independent mechanism significantly outperforms
the tuned ones for small $\varepsilon$ and obtains accurate models
even for $\varepsilon = 0.1$! Finally, using the same simulation approach described
by Fredrikson et al.~\cite{FLJLPR14}, we note that as
compared to the functional mechanism, our methods induce significantly
smaller risk of mortality, bleeding and stroke, especially when
compared at small $\varepsilon$ settings.

Our results described thus far are positive, and follow in the
tradition of a large body of research on differential privacy, which
seeks to improve utiltiy/privacy tradeoffs for various problems of
interest. However, Fredrikson et al.~\cite{FLJLPR14} did not only
consider differential privacy, they also considered an attack they
term \emph{model inversion}.  As a simple example, consider a machine
learning model $w$ that takes features $x_1, \dots, x_d$ and produces
a prediction $y$.  A model-inversion (MI) attack takes input
$x_1, \dots, x_{d-1}$ and a value $y'$ that is related to $y$, and
tries to predict $x_d$ (thus ``inverting the model''). For example,
in~\cite{FLJLPR14}, the authors consider the case where $x_d$ is a
genetic marker, $y$ is the warfarin dosage, and $x_1,\dots, x_{d-1}$
are general background information such as height and weight.  They
used an MI attack to predict an individual's \emph{genetic markers}
based on his or her \emph{warfarin dosage}, thus violating that individual's
privacy.

In the final part of our paper, we consider the impact of our improved
privacy-utility tradeoff on MI attacks. Here our results are less
positive: in improving the privacy-utility tradeoff, we have increased
the effectiveness of MI attacks. While unintended, upon deeper reflection
this is not surprising: simply put, the improved privacy-utility tradeoff
results in less noise added, and less noise added means the model is easier to invert.
We formalize this discussion and prove that this phenomenon is quite general,
and not an artifact of our approach or this specific learning task.

What this means for privacy is an open question. If one ``only cares''
about differential privacy, then the increased susceptibility to MI
attacks is irrelevant. However, if one believes that MI attacks are
significant (and anecdotal evidence suggests that some medical
professionals are concerned about MI attacks), then the fact that
improved differential privacy can mean worse MI exposure warrants
further study. In this direction, our work indeed extends a long line of work
that discusses the interaction of DP and {\em attribute privacy}
~\cite{KM11, LO11, RAWHPS10}, and gives a realistic application
where misconceptions about DP can lead to unwanted disclosure.
It is our hope that our work might highlight this issue
and stimulate more discussion.

Our technical contributions can be summarized as follows:
\begin{itemize}
\item We give an explicit connection between differential privacy and
  stability theory in machine learning.  We show that differential privacy is
  essentially a strong stability notion that implies well-known previous
  notions.

\item Moreover, combining this connection with some machinery from stability theory,
  we prove that the simple output perturbation mechanism can learn \emph{every}
  convex-Lipschitz-bounded learning problems with strong differential privacy.
  Our analysis relaxes the technical conditions required by Chaudhuri et al.~\cite{CMS11},
  achieves the same tradeoff between DP and generalization error
  as proved by Bassily et al.~\cite{BST14}, and is much simpler than both analyses.

\item Since output perturbation is one of the simplest mechanisms to
  implement, it means that our method is widely applicable in practice.
  We go on to apply the theory to linear regression. We present and analyze
  regularized variants of linear regression, and give a detailed description
  on how to privately select parameters for regularization.

\item We perform a re-evaluation, using the same medical data set and
  experimental methodology as previous work~\cite{FLJLPR14}, comparing the
  functional mechanism with our mechanisms to train linear regression models.
  Encouragingly, we observe a substantially better tradeoff between
  differential privacy and utility.

\item We perform a re-evaluation of model inversion in the same medical data
  analysis setting, and demonstrate that as we improve differentially-private
  mechanisms, MI attacks become more problematic. We provide a theoretical
  explanation of this phenomenon.
\end{itemize}

The rest of the paper is organized as follows: In Section~\ref{sec:prelim} we
present background knowledge that will facilitate our discussion later.
Next we present the connection between differential privacy and
stability theory in Section~\ref{sec:dp-and-stability},
and discuss the applications in Section~\ref{sec:applications}.
In Section~\ref{sec:evaluation}, we compare output perturbation and the
functional mechanism with respect to the privacy-utility or model-inversion
efficacy tradeoff. Finally, Section~\ref{sec:related} discusses related work
and in Section~\ref{sec:conclusion}, we provide concluding remarks.


\section{Preliminaries}
\label{sec:prelim}
In this section we present some background in three different areas: learning
theory, convex optimization and differential privacy. We will only cover some
basics of these areas to facilitate our discussion later. Readers are referred
to~\cite{BV04},~\cite{DR14} and~\cite{UML} for an in-depth introduction to
these three topics, respectively.

\subsection{Learning Theory}
\label{preliminaries:learning-theory}
Let $X$ be a feature space, $Y$ be an output space, and $Z = X \times Y$ be a
sample space. For example, $Y$ is a set of labels for classification, or an
interval in $\Real$ for regression. Let $\cal H$ be a hypothesis space
and $\ell$ be an \emph{instance-wise} loss function, such that on an input
hypothesis $w \in {\cal H}$ and a sample $z \in Z$, it gives a loss $\ell(w, z)$.

Let $\cal D$ be a distribution over $Z$, the goal of learning is to find a
hypothesis $w \in {\cal H}$ so as to minimize its \emph{generalization error}
(or \emph{true loss}), which is defined to be
$ L_{\cal D}(w) = \Exp_{z \sim {\cal D}}[\ell(w, z)]$.
The minimum generalization error achievable over $\cal H$ is denoted as
$L_{\cal D}^* = \min_{w \in {\cal H}}L_{\cal D}(w)$.
In learning, $\cal D$ is \emph{unknown} but we are given a training set
$S = \{z_1, \dots, z_n \}$ drawn i.i.d. from $\cal D$.
The \emph{empirical loss function} is defined to be
$L_S(w) = \frac{1}{n}\sum_{i=1}^n\ell(w,z_i)$.

\cbstart
We are ready to define learnability. Our definition follows Shalev-Shwartz et al.
\cite{SSSSS10} which defines learnability in the Generalized Learning Setting
considered by Haussler~\cite{H92}. This definition also directly generalizes
PAC learnability~\cite{Valiant84}.
\begin{definition}[Learnability]
  A problem is called \emph{agnostically learnable with rate}
  $\varepsilon(n,\delta): \Nat \times (0,1) \mapsto (0,1)$
  if there is a learning rule $A: Z^n \mapsto {\cal H}$ such that for \emph{any}
  distribution $\cal D$ over $Z$, given $n \in \Nat$ and $\delta \in (0,1)$,
  with probability $1-\delta$ over $S \sim {\cal D}^{n}$,
  $L_{\cal D}(A(S)) \le L_{\cal D}^* + \varepsilon(n,\delta)$.
  Moreover, we say that the problem is \emph{agnostically learnable}
  if for any $\delta \in (0,1)$,
  $\varepsilon(n,\delta)$ vanishes to $0$ as $n$ tends to infinity.
\end{definition}
We stress that in this definition, the generalization error holds
{\em universally} for {\em every} distribution $\cal D$ on the data.\cbend
Intuitively, this definition says that given a confidence parameter
$\delta$ and a sample size $n$, the learned hypothesis $A(S)$ is
$\varepsilon(n,\delta)$ close to the best achievable. Note that
$\varepsilon(n,\delta)$ measures the rate we converge to the optimal.
Throughout this paper, we will use the \emph{generalization error} as the
\emph{utility measure} of a hypothesis $w$. Naturally, the utility of a
hypothesis is high if its generalization error is small.

In the above, the learning rule $A$ is \emph{deterministic} in the sense that
it maps a training set deterministically to a hypothesis in $\cal H$.
We will also talk about \emph{randomized} learning rules, which maps a training
set to a distribution over $\cal H$. More formally, a randomized learning rule
$\tildeA$ takes the form $Z^n \mapsto {\cal D}({\cal H})$, where
${\cal D}({\cal H})$ is the set of probability distributions over ${\cal H}$.
The empirical risk of $\tildeA$ on a training item $z$ is defined as
$\ell(\tildeA(S), z) = \Exp_{w \sim \tildeA(S)}[\ell(w, z)]$.
The empirical risk of $\tildeA$ on a training set $S$ is defined as
$L_S(\tildeA(S)) = \Exp_{w \sim \tildeA(S)}[L_S(w)]$.
Finally, we define the generalization error of $\widetilde{A}$ as
$L_{\cal D}(\tildeA(S)) = \Exp_{w \sim \tildeA(S)}[L_{\cal D}(w)]$.
In short, we take expectation over the randomness of $\tildeA$.

\vskip 5pt\noindent\textbf{Stability Theory}.
Stability theory is a sub-theory in machine learning
(see, for example, Chapter 13 of~\cite{UML} for a gentle survey of this area).
As we will see, more stable a learning rule is, better DP-utility tradeoff we
can achieve in a private learning.  To discuss stability, we need to define
``change of input data set.'' We will use the following definition.
\begin{definition}[Replace-One Operation]
  For a training set $S$, $i \in [n]$ and $z' \in Z$, we define $S^{(i,z')}$
  to be the training   set obtained by replacing $z_i$ by $z'$. In other words,
  \begin{align*}
    S &= \{z_1, \dots, z_{i-1}, z_i, z_{i+1}, \dots, z_n\},\\
    S^{(i,z')} &= \{z_1, \dots, z_{i-1}, z', z_{i+1}, \dots, z_n\}.
  \end{align*}
  Moreover, we write $S^{(i)}$ instead of $S^{(i,z')}$
  if $z'$ is clear from the context.
\end{definition}
Informally, a learning rule $A$ is stable if $A(S)$ and $A(S^{(i)})$ are
``close'' to each other. There are many possible ways to formulate what does
it mean by ``close.'' We will discuss the following definition, which is the
\emph{strongest} stability notion defined by Shalev-Shwartz et al.
\cite{SSSSS10}.
\begin{definition}[Strongly-Uniform-RO Stability~\cite{SSSSS10}]
  \label{def:strongly-uniform-ro-stability}
  A (possibly randomized) learning rule $A$ is \emph{strongly-uniform-RO stable}
  with rate $\varepsilon_{\rm stable}(n)$, if for all training sets $S$ of size
  $n$, for all $i \in [n]$, and all $z', \bar{z} \in Z$, it holds that
  $|\ell(A(S^{(i)}), \bar{z}) - \ell(A(S), \bar{z})|
   \le \varepsilon_{\rm stable}(n)$.
\end{definition}
Intuitively, this definition captures the following property of a good learning
algorithm $A$: if one changes \emph{any one training item} in the training set
$S$ to get $S'$, the two hypotheses computed by $A$ from $S$ and $S'$, namely
$A(S)$ and $A(S')$, will be ``close'' to each other (here ``close'' means that
for any instance $\bar{z}$ sampled from $\cal D$, the loss of $A(S)$ and
$A(S')$ on $\bar{z}$ are close).

A fundamental result on learnability and stability, proved recently by
Shalev-Shwartz et al.~\cite{SSSSS10}, states the following,
\begin{theorem}[\cite{SSSSS10}, informal]
  Consider any learning problem in the generalized learning setting
  as proposed by Vapnik~\cite{Vapnik98}.
  If the problem is learnable, then it can be learned by a (randomized) rule
  that is strongly-uniform-RO stable.
\end{theorem}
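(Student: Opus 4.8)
The plan is to exhibit, starting from an arbitrary rule that witnesses learnability, a \emph{single} randomized rule that simultaneously learns the problem and is strongly-uniform-RO stable; the mechanism I would use is \emph{subsampling}. Since the problem is learnable, fix a learning rule $A$ whose rate $\varepsilon(m,\delta)$ vanishes as $m \to \infty$ for every $\delta$. I would then define a randomized rule $\tildeA$ on samples of size $n$ as follows: pick a subsample size $k = k(n) < n$, draw a uniformly random subset $T \subseteq [n]$ with $|T| = k$ (without replacement), and output $A(S_T)$, where $S_T$ is the subsample indexed by $T$. The entire proof reduces to showing that, for a suitable choice of $k(n)$, this $\tildeA$ is both stable and a valid learner.

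For stability, fix $S$, $i \in [n]$, $z'$ and $\bar z$ arbitrarily, and couple the two runs of $\tildeA$ on $S$ and on $S^{(i)}$ by using the \emph{same} random index set $T$ (and the same internal coins of $A$). Because $S$ and $S^{(i)}$ differ only at coordinate $i$, whenever $i \notin T$ the two subsamples are identical and the expected losses agree exactly; the two can differ only on the event $i \in T$, which has probability $k/n$. Assuming, as is standard in the generalized learning setting, that $\ell$ takes values in a bounded range $[0,B]$, each term is at most $B$, so $|\ell(\tildeA(S^{(i)}),\bar z) - \ell(\tildeA(S),\bar z)| \le B \cdot \Pr[i \in T] = Bk/n$, uniformly over $S$, $i$, $z'$ and $\bar z$. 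Hence $\tildeA$ is strongly-uniform-RO stable with rate $\varepsilon_{\rm stable}(n) \le Bk/n$.

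For learning, the key observation is that a uniformly random size-$k$ subset of an i.i.d.\ sample is itself distributed as ${\cal D}^{k}$, so $S_T \sim {\cal D}^{k}$ and $A(S_T)$ enjoys the rate of $A$ at sample size $k$. The one point requiring care is that $\tildeA$ is randomized, so its generalization error is $L_{\cal D}(\tildeA(S)) = \Exp_{w \sim \tildeA(S)}[L_{\cal D}(w)]$; I would first pass from the high-probability guarantee of $A$ to a bound in expectation (using boundedness of $\ell$), concluding $\Exp_S[L_{\cal D}(\tildeA(S))] \le L_{\cal D}^* + \bar\varepsilon(k)$ with $\bar\varepsilon(k) \to 0$, and then recover a high-probability statement over $S$ via Markov's inequality, exploiting that $L_{\cal D}(w) \ge L_{\cal D}^*$ for every $w$ so that the excess risk is nonnegative.

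Finally I would balance the two rates by choosing $k(n)$ so that $k(n) \to \infty$ while $k(n)/n \to 0$ — for instance $k(n) = \lceil \sqrt{n} \rceil$ — which drives $\varepsilon_{\rm stable}(n) = O(B/\sqrt{n})$ to $0$ and, since $k(n) \to \infty$, simultaneously drives the generalization rate to $0$. The main obstacle I anticipate is not the stability bound, which the coupling makes immediate, but the generalization step: one must carefully justify that the subsample is genuinely i.i.d., and convert between the high-probability, deterministic-rule guarantee that \emph{defines} learnability and the in-expectation guarantee natural for the randomized rule $\tildeA$, all without losing the vanishing of the rate. Boundedness of the loss is precisely what makes this conversion go through.
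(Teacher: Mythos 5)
Your proposal is correct, and it is essentially the argument behind the cited result: the paper itself states this theorem only informally, with a pointer to \cite{SSSSS10} and no proof, and the construction in that reference is precisely your subsampling scheme (run the given learner on a uniformly random subsample of size $k(n)$ with $k(n)\to\infty$ and $k(n)/n\to 0$, obtain strongly-uniform-RO stability at rate $Bk(n)/n$ by coupling the two runs, and recover the learning guarantee because the subsample is itself i.i.d.). Your treatment of the two delicate points --- the bounded-loss assumption, and the conversion via expectation and Markov between the deterministic rule's high-probability guarantee and the randomized rule's expected generalization error --- is also sound, so the proposal matches the intended proof.
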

Qualitatively, this theorem says that
\emph{learnability and stability are equivalent}.
That is, every problem that is learnable can be learned \emph{stably}
(under strongly-uniform-RO stability).
In the context of differential privacy, this indicates that for any
learnable problem one might hope to achieve a good DP-utility tradeoff.
However, \emph{quantitatively} the situation is much more delicate:
as we will see later, strongly-uniform-RO stability is somewhat too weak to
lead to a differential privacy guarantee without additional assumptions.

\subsection{Convex Optimization}
\label{preliminaries:convex-optimization}
We will need the following basic concepts from convex optimization.
Let $\cal H$ be a closed convex set equipped with a norm $\| \cdot \|$.
A set $\cal H$ is $R$-bounded if $\|x\| \le R$ for any $x \in {\cal H}$.
A function $f: {\cal H} \mapsto \Real$ is \emph{convex}
if for every $u, v \in {\cal H}$, and $\alpha \in (0,1)$ we have
\begin{align*}
  f(\alpha u + (1-\alpha)v) \le \alpha f(u) + (1-\alpha)f(v).
\end{align*}
Moreover, $f$ is \emph{$\lambda$-strongly convex} if instead
\begin{align*}
  f(\alpha u + (1-\alpha)v) \le\ &\alpha f(u) + (1-\alpha)f(v) \\
  &- \frac{\lambda}{2}\alpha(1-\alpha)\|u-v\|^2.
\end{align*}
A function $f: {\cal H} \mapsto \Real$ is $\rho$-Lipschitz if for any
$u,v \in {\cal H}$, $|f(u) - f(v)| \le \rho\|u - v\|$.
Let $\cal H$ be a closed convex set in $\Real^d$.
A \emph{differentiable} function $f: {\cal H} \mapsto \Real$ is $\beta$-smooth
if its gradient $\nabla f$ is $\beta$-Lipschitz (note that $\nabla f$ is a
$d$-dimensional vector valued function).
That is, $\| \nabla f(u) - \nabla f(v) \| \le \beta \|u - v\|$.
We will use the following property of strongly convex functions.
It says that any convex function can be turned into a strongly convex one by
adding to it a strongly convex function.
\begin{lemma}[\cite{BV04}]
  \label{lemma:convex2strongly-convex}
  If $f$ is $\lambda$-strongly convex and $g$ is convex then $(f+g)$ is
  $\lambda$-strongly convex.
\end{lemma}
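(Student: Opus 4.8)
The plan is to verify the defining inequality of $\lambda$-strong convexity for $f+g$ directly from the two hypotheses, since the claim follows by simply summing them. I would fix arbitrary $u,v \in {\cal H}$ and $\alpha \in (0,1)$. By the $\lambda$-strong convexity of $f$ I have $f(\alpha u + (1-\alpha)v) \le \alpha f(u) + (1-\alpha)f(v) - \frac{\lambda}{2}\alpha(1-\alpha)\|u-v\|^2$, and by the convexity of $g$ I have $g(\alpha u + (1-\alpha)v) \le \alpha g(u) + (1-\alpha)g(v)$.

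The key step is to add these two inequalities and regroup. Since $(f+g)(\alpha u + (1-\alpha)v) = f(\alpha u + (1-\alpha)v) + g(\alpha u + (1-\alpha)v)$, summing the right-hand sides and collecting the $\alpha$ and $(1-\alpha)$ terms yields
\begin{align*}
(f+g)(\alpha u + (1-\alpha)v) \le\ &\alpha (f+g)(u) + (1-\alpha)(f+g)(v) \\
&- \frac{\lambda}{2}\alpha(1-\alpha)\|u-v\|^2,
\end{align*}
which is exactly the definition of $(f+g)$ being $\lambda$-strongly convex. Because $u,v$ and $\alpha$ were arbitrary, this completes the argument.

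There is no real obstacle here: the only point worth noting is that the quadratic penalty term $-\frac{\lambda}{2}\alpha(1-\alpha)\|u-v\|^2$ is contributed entirely by $f$, while $g$ contributes nothing to it (convexity is the $\lambda=0$ case of strong convexity), so the penalty carries through unchanged and the combined function retains the same modulus $\lambda$. The same one-line computation would in fact establish the more general fact that strong-convexity moduli add, of which this lemma is the special case where $g$ has modulus zero.
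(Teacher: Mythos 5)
Your proof is correct and complete: summing the strong-convexity inequality for $f$ with the convexity inequality for $g$ and regrouping is exactly the standard argument, and your closing remark that convexity is the $\lambda=0$ case (so moduli add) is a nice observation. The paper itself gives no proof of this lemma --- it simply cites~\cite{BV04} --- so your direct verification supplies precisely what the cited reference would.
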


\subsection{Differential Privacy}
\label{preliminaries:dp}
Given two databases $D, D'$ of size $n$,
we say that they are \emph{neighboring} if they differ in at most one tuple.
We define bounded $\varepsilon$-differential privacy.
\begin{definition}[Bounded Differential Privacy]\ \\
  \label{def:bounded-dp}
  A mechanism ${\cal M}$ is called
  \emph{bounded $\varepsilon$-differentially private},
  if for any neighboring $D, D'$, and any event
  $E \subseteq {\rm Range}({\cal M})$,
  $\Pr[{\cal M}(D) \in E] \le e^{\varepsilon}\Pr[{\cal M}(D') \in E]$.
\end{definition}
This is called ``bounded'' differential privacy because the guarantee is over
databases of size $n$. Note that all previous work in differentially private
learning is for bounded differential privacy~\cite{BST14,CMS11,ZZXYW12}.
Dwork, McSherry, Nissim and Smith~\cite{DMNS06} provide an output perturbation
method for ensuring differential privacy. This method is based on estimating
the ``sensitivity'' of a query, defined as follows.
\begin{definition}
  \label{def:l2-sensitivity}
  Let $q$ be a query that maps a database to a vector in $\Real^d$.
  The $\ell_2$-sensitivity of $q$ is defined to be
  $$\Delta_2(q) = \max_{D \sim D'}\| q(D) - q(D') \|_2.$$
\end{definition}
Note that we are using the $2$-norm sensitivity instead of the usual $1$-norm.
We have the following theorem.
\begin{theorem}[\cite{DMNS06}]
  \label{fact:laplacian-dp}
  Let $q$ be a query that maps a database to a vector in $\Real^d$.
  Then publishing $q(D) + \kappa$ where $\kappa$ is sampled from the
  distribution with density
  \begin{align*}
    p(\kappa) \propto
    \exp\left(-\frac{\varepsilon}{\Delta_2(q)}\|\kappa\|_2\right)
  \end{align*}
  ensures $\varepsilon$-differential privacy.
\end{theorem}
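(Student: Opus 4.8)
The plan is to verify Definition~\ref{def:bounded-dp} directly by comparing, \emph{pointwise}, the densities of the published value on two neighboring databases. Writing ${\cal M}(D) = q(D) + \kappa$, the density of the output at a point $t \in \Real^d$ is
\[
  p_D(t) = c \cdot \exp\left(-\frac{\varepsilon}{\Delta_2(q)}\|t - q(D)\|_2\right),
\]
where $c$ is the normalizing constant. The first reduction is to observe that an event-level guarantee follows from a pointwise density bound: if $p_D(t) \le e^{\varepsilon}\, p_{D'}(t)$ for every $t$, then integrating both sides over any measurable event $E \subseteq {\rm Range}({\cal M})$ immediately yields $\Pr[{\cal M}(D) \in E] \le e^{\varepsilon}\Pr[{\cal M}(D') \in E]$. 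So it suffices to control the ratio of densities.

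The key step is that the normalizing constant $c$ depends only on $\varepsilon$, $\Delta_2(q)$, and the dimension $d$ --- \emph{not} on the center $q(D)$ --- because the density is translation invariant (a change of variables $\kappa = t - q(D)$ shows the integral defining $c$ is independent of the center). Hence $c$ cancels in the ratio, and
\[
  \log \frac{p_D(t)}{p_{D'}(t)}
    = \frac{\varepsilon}{\Delta_2(q)}
      \left(\|t - q(D')\|_2 - \|t - q(D)\|_2\right).
\]
By the reverse triangle inequality the parenthesized difference is at most $\|q(D) - q(D')\|_2$, and since $D, D'$ are neighboring, Definition~\ref{def:l2-sensitivity} gives $\|q(D) - q(D')\|_2 \le \Delta_2(q)$. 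Therefore the log-ratio is at most $\varepsilon$, i.e.\ $p_D(t) \le e^{\varepsilon}\, p_{D'}(t)$ for every $t$, which by the reduction above completes the proof.

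There is no deep obstacle here; the only points requiring care are well-definedness of the density and the cancellation of the normalizer. For the former I would note that $\int_{\Real^d} \exp\!\left(-\frac{\varepsilon}{\Delta_2(q)}\|\kappa\|_2\right) d\kappa$ is finite (passing to polar coordinates reduces the radial part to a convergent Gamma-type integral), so the stated density is a genuine probability distribution and $c$ is finite and positive. For the latter, the translation-invariance argument is exactly what licenses cancelling $c$ in the ratio; getting these two facts straight is the whole substance of the proof, and the privacy bound then falls out of a single application of the triangle inequality together with the sensitivity bound.
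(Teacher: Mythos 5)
Your proof is correct. Note that the paper itself offers no proof of this statement: it is imported verbatim as a cited result of Dwork, McSherry, Nissim and Smith \cite{DMNS06}, so there is no in-paper argument to compare against. What you wrote is the standard argument for the $\ell_2$-norm Laplace-type mechanism --- pointwise density ratio, cancellation of the translation-invariant normalizer, reverse triangle inequality plus the sensitivity bound, then integration over the event --- and all the steps, including the finiteness of the normalizing integral via polar coordinates, are sound.
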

Importantly, the $\ell_2$-norm of the noise vector, $\|\kappa\|_2$,
is distributed according to a Gamma distribution
$\Gamma\left(d, \frac{\Delta_2(q)}{\varepsilon}\right)$.
We have the following fact about Gamma distributions:
\begin{theorem}[\cite{CMS11}]
  \label{cor:norm-noise-vector}
  For the noise vector $\kappa$, we have that with probability at least
  $1-\gamma$, $\|\kappa\|_2 \le \frac{d\ln(d/\gamma)\Delta_2(q)}{\varepsilon}.$
\end{theorem}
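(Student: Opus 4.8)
The plan is to exploit the representation of a Gamma distribution with integer shape parameter as a sum of independent exponentials, after which a one-line union bound finishes the job. Recall from the discussion immediately preceding the statement that $\|\kappa\|_2$ is distributed as $\Gamma\left(d, \Delta_2(q)/\varepsilon\right)$; write $\theta = \Delta_2(q)/\varepsilon$ for the scale. Since $d$ (the ambient dimension) is a positive integer, I would first express $\|\kappa\|_2 = \sum_{i=1}^d X_i$, where the $X_i$ are i.i.d.\ exponential random variables each with scale $\theta$, so that $\Pr[X_i > s] = e^{-s/\theta}$ for $s \ge 0$. This Erlang decomposition is the only structural fact the argument needs.

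Next I would fix the per-coordinate threshold $s^\star = \theta \ln(d/\gamma)$ and compute the tail of a single component, namely $\Pr[X_i > s^\star] = e^{-\ln(d/\gamma)} = \gamma/d$. A union bound over the $d$ coordinates then gives $\Pr[\exists\, i : X_i > s^\star] \le d \cdot (\gamma/d) = \gamma$, so with probability at least $1-\gamma$ every $X_i$ is at most $s^\star$.

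Finally, on this high-probability event I would simply sum the coordinate bounds: $\|\kappa\|_2 = \sum_{i=1}^d X_i \le d\, s^\star = d\,\theta \ln(d/\gamma) = \frac{d \ln(d/\gamma)\, \Delta_2(q)}{\varepsilon}$, which is exactly the claimed inequality, completing the proof.

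There is no serious obstacle here; once the exponential decomposition is in hand the argument is elementary. The only step warranting a little care is the appeal to the integer-shape representation $\Gamma(d,\theta) = \sum_{i=1}^d \mathrm{Exp}(\theta)$, since this is precisely what replaces a direct (and messier) moment-generating-function or Markov tail bound on the Gamma variable with a clean union bound. A direct Chernoff-type argument optimizing over the MGF $\mathbb{E}[e^{s\|\kappa\|_2}] = (1-s\theta)^{-d}$ also works, but it does not reproduce the stated constant nearly as transparently, so I would favor the exponential-decomposition route.
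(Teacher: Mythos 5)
Your proof is correct: writing $\|\kappa\|_2 \sim \Gamma(d,\theta)$ as a sum of $d$ i.i.d.\ exponentials with scale $\theta = \Delta_2(q)/\varepsilon$, bounding each coordinate at $\theta\ln(d/\gamma)$, and applying a union bound gives exactly the claimed inequality. The paper itself does not prove this statement---it is imported from Chaudhuri et al.\ \cite{CMS11}---and your Erlang-decomposition-plus-union-bound argument is precisely the standard proof given in that cited source, so there is nothing to add.
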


\subsection{Our Setting}
\label{preliminaries:setting}
In this paper, we will work in the same setting as in~\cite{BST14},
where we assume the instance-wise loss function $\ell$ to be convex and
Lipschitz. As we will see later, both conditions are crucial for achieving
a good tradeoff between privacy and utility. We stress that \emph{both}
the convexity and Lipschitzness conditions here are
\emph{only with respect to $w$}. In other words, we only require
$\ell(w,z)$ to be convex and Lipschitz \emph{in $w$} (for any fixed
$z \sim {\cal D}$ at a time). Note also that, differing from Chaudhuri et al.
\cite{CMS11}, we do \emph{not} need $\ell$ to be differentiable in $w$.
Finally, we will focus on $\varepsilon$-differential privacy. Our results
readily extend to $(\varepsilon,\delta)$-differential privacy by using a
different noise distribution (e.g. Gaussian distribution) in output perturbation.


\section{Differential Privacy and Stability Theory}
\label{sec:dp-and-stability}
In this section we present our results on the connection
between differential privacy and stability theory.
Our technical results can be summarized as follows:
\vskip 5pt

\noindent\textbf{DP implies Strongly-Uniform-RO Stability}.
In Section~\ref{sec:dp-is-stability},
we show that, in the setting of learning,
differential privacy is a strong stability notion that
implies strongly-uniform-RO stability.
Strongly-Uniform-RO stability is the strongest stability notion
proposed by Shalev-Shwartz et al.~\cite{SSSSS10} from learning theory.
\vskip 5pt

\noindent\textbf{Norm Stability implies DP}.
In Section~\ref{sec:norm-stability},
we give a stability notion, $\ell_2$-RO stability,
that leads to differential privacy by injecting a small amount of noise.
$\ell_2$-RO stability is used implicitly in~\cite{SSSSS09} to prove
learnability of convex-Lipschitz-bounded problems under the condition
that the instance loss function is smooth.
Our analysis removes this requirement.
\vskip 5pt

\noindent\textbf{Simpler Mechanism with the Same Generalization Error}.
In recent work, Bassily et al.~\cite{BST14} give tight bounds
(for both training and generalization errors)
for differentially privately learning convex-Lipschitz-bounded problems.
Their mechanisms require the exponential mechanism and a sophisticated
sampling subprocedure. We show in Section~\ref{sec:dp-generalization} that
the elementary output perturbation mechanism presented in~\cite{CMS11}
can give the same tradeoff between differential privacy and
generalization error (however with weaker training error)
for \emph{every} convex-Lipschitz-bounded learning problem.
Our proof relaxes the technical requirements of~\cite{CMS11} (smoothness),
and is significantly simpler than both~\cite{BST14,CMS11}.

\subsection{Differential Privacy is a Stability Notion}
\label{sec:dp-is-stability}
If one writes out the definition of bounded differential privacy
(Definition~\ref{def:bounded-dp}) in the language of learning,
it becomes: a learning rule $A$ is $\varepsilon$-differentially private if,
for all training set $S$ of size $n$, for all $i \in [n]$, and all $z' \in Z$,
and \emph{any event $E$}, it holds that
$\Pr[A(S) \in E] \le e^{\varepsilon}\Pr[A(S^{(i)}) \in E]$,
where the probability is taken over the randomness of $A$.
When we contrast this definition with strongly-uniform-RO stability
(Definition~\ref{def:strongly-uniform-ro-stability}),
the only difference is that the latter considers a particular type of event,
namely for $\bar{z} \in Z$, the magnitude of $\ell(\tildeA(S), \bar{z})$.
At this point, it is somewhat clear that differential privacy is essentially
(yet another) stability notion.
Nevertheless, it is so strong that it implies strongly-uniform-RO stability,
as shown in the following:
\begin{proposition} 
  \label{prop:DP-implies-strongly-uniform-RO-stability}
  Suppose that $|\ell(\cdot,\cdot)| \le B$. Let $\varepsilon > 0$ and $A$ be a randomized learning rule.
  If $A$ is $\varepsilon$-differentially private,
  then it is strongly-uniform-RO stable with rate
  $\varepsilon_{\rm stable} \le B(e^\varepsilon-1)$.
  Specifically, for $\varepsilon \in (0,1)$, this is approximately $B\varepsilon$.
\end{proposition}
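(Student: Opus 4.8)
The plan is to translate the differential privacy guarantee on $A$, which is a statement about probabilities of \emph{events}, into a guarantee about the \emph{expected} losses $\ell(A(S),\bar z)=\Exp_{w\sim A(S)}[\ell(w,\bar z)]$ and $\ell(A(S^{(i)}),\bar z)=\Exp_{w\sim A(S^{(i)})}[\ell(w,\bar z)]$ that appear in the definition of strongly-uniform-RO stability. Fix an arbitrary training set $S$, an index $i\in[n]$, a replacement $z'\in Z$, and a test point $\bar z\in Z$. Write $P$ and $Q$ for the distributions of $A(S)$ and $A(S^{(i)})$ over $\cal H$, and set $f(w)=\ell(w,\bar z)$, so that $0\le f(w)\le B$ (recall loss is nonnegative). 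The goal reduces to bounding $|\Exp_{w\sim P}[f(w)]-\Exp_{w\sim Q}[f(w)]|$ by $B(e^\varepsilon-1)$.

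First I would rewrite each expectation using the layer-cake (Fubini) identity
\begin{align*}
  \Exp_{w\sim P}[f(w)] = \int_0^B \Pr_{w\sim P}[f(w)>t]\,dt,
\end{align*}
and likewise for $Q$; the upper limit is $B$ because $f\le B$. The point of this step is that each integrand is the $P$-probability of the superlevel set $E_t=\{w\in{\cal H}:f(w)>t\}$, which is exactly the kind of object differential privacy controls. Applying Definition~\ref{def:bounded-dp} (in its learning form $\Pr[A(S)\in E]\le e^\varepsilon\Pr[A(S^{(i)})\in E]$) to each event $E_t$ gives $\Pr_{w\sim P}[f(w)>t]\le e^\varepsilon\Pr_{w\sim Q}[f(w)>t]$ pointwise in $t$; integrating over $t\in[0,B]$ yields $\Exp_{w\sim P}[f(w)]\le e^\varepsilon\,\Exp_{w\sim Q}[f(w)]$. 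Subtracting $\Exp_{w\sim Q}[f(w)]$ and using $\Exp_{w\sim Q}[f(w)]\le B$ then gives
\begin{align*}
  \Exp_{w\sim P}[f(w)]-\Exp_{w\sim Q}[f(w)]
  \le (e^\varepsilon-1)\,\Exp_{w\sim Q}[f(w)]
  \le B(e^\varepsilon-1).
\end{align*}

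Since differential privacy is symmetric in $S$ and $S^{(i)}$, the same argument with the roles of $P$ and $Q$ exchanged bounds the difference in the other direction, so $|\ell(A(S^{(i)}),\bar z)-\ell(A(S),\bar z)|\le B(e^\varepsilon-1)$. As $S,i,z',\bar z$ were arbitrary, this is precisely strongly-uniform-RO stability with rate $\varepsilon_{\rm stable}\le B(e^\varepsilon-1)$, and the first-order expansion $e^\varepsilon-1=\varepsilon+O(\varepsilon^2)$ gives the claimed $\approx B\varepsilon$ for $\varepsilon\in(0,1)$. I expect the only real subtlety to be the passage from events to expectations: the layer-cake step is what makes it clean, and it relies on two mild points worth checking --- that the loss is nonnegative and bounded (so the integral runs over the finite range $[0,B]$; a signed loss would cost an extra factor of two via a positive/negative-part split), and that the superlevel sets $E_t$ are valid measurable events in the range of $A$ so that the DP inequality may legitimately be applied to them.
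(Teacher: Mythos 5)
Your proof is correct for nonnegative losses, and it takes a genuinely different route from the paper's. The paper works directly with densities: writing $f_S$ for the density of $A(S)$, differential privacy applied in both directions gives the pointwise bound $|f_S(w) - f_{S^{(i)}}(w)| \le (e^\varepsilon - 1) f_{S^{(i)}}(w)$, and then $|\Exp[\ell(A(S),\bar z)] - \Exp[\ell(A(S^{(i)}),\bar z)]| \le B \int |f_S - f_{S^{(i)}}|\,dw \le B(e^\varepsilon - 1)$. Your layer-cake argument instead applies the DP inequality only to superlevel-set events, which buys real generality: it never assumes the output distributions admit densities, so it covers randomized rules with discrete or otherwise non-dominated output laws, for which the paper's argument is not literally valid. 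What the paper's computation buys in exchange is exactly the caveat you flag yourself: the hypothesis is $|\ell| \le B$, not $0 \le \ell \le B$, and the density argument handles signed losses at no extra cost, whereas your main chain uses nonnegativity. However, your worry that the signed case costs a factor of two is unfounded: writing $P, Q$ for the laws of $A(S), A(S^{(i)})$ and splitting $f = f_+ - f_-$, use the forward inequality $\Pr_P[f > t] - \Pr_Q[f > t] \le (e^\varepsilon - 1)\Pr_Q[f > t]$ on the positive part, and the reverse inequality $P \ge e^{-\varepsilon} Q$ (DP with the roles of $S$ and $S^{(i)}$ exchanged) on the negative part, which gives $\Pr_Q[f < -t] - \Pr_P[f < -t] \le (1 - e^{-\varepsilon})\Pr_Q[f < -t] \le (e^\varepsilon - 1)\Pr_Q[f < -t]$. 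Both bounds are now in terms of $Q$-probabilities, so integrating over $t \in [0, B]$ and adding yields
\begin{align*}
  \Exp_P[f] - \Exp_Q[f]
  \le (e^\varepsilon - 1)\big(\Exp_Q[f_+] + \Exp_Q[f_-]\big)
  = (e^\varepsilon - 1)\Exp_Q[|f|]
  \le B(e^\varepsilon - 1),
\end{align*}
and the symmetric argument bounds the other direction. With this one-line patch your proof establishes the proposition exactly as stated, under strictly weaker regularity assumptions than the paper's own argument.
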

Two remarks are in order. First, this implication holds
without assuming anything on the loss function $\ell$ except for boundedness.
Second, one may note that the \emph{converse} of this proposition,
however, is not true in general.
For example, consider the case where $\ell$ is a constant function and
$A(S) = h_1 \neq h_2 = A(S^{(i)})$.
Then $A$ is strongly-uniform-RO stable (with rate $0$!)
yet it is clearly \emph{not} differentially private.
Moreover, this example indicates that,
even if strongly-uniform-RO stability has been achieved,
one cannot hope for differential privacy
by adding a ``small amount'' of noise to the output of $A$.
This is because $h_1$ and $h_2$ can be arbitrarily far away from each other
so the sensitivity of $A$ cannot be bounded.
This motivates us to define another stability notion for the purpose of
differential privacy.


\subsection{Norm Stability and Noise for DP}
\label{sec:norm-stability}
In this section we present a different stability notion that does lead to
differential privacy by injecting a small amount of noise.
We then use some machinery from stability theory to
quantify the amount of noise needed.
Following our discussion above, a natural idea now is that
$A(S)$ and $A(S^{(i)})$ shall be close \emph{by themselves},
rather than being close \emph{under the evaluation of some functions}.
Because the output of $A$ lies in $\cal H$, which is a normed space\footnote{
  For simplicity, $\| \cdot \|$ refers to $\ell_2$-norm in the rest of the paper.
  Our results are applicable to other settings as long as perturbation is properly
  defined on the normed space.}
as long as perturbation on ),
a ``universal'' notion for closeness is that
$A(S)$ and $A(S^{(i)})$ are close in norm.
This leads to the following definition.
\begin{definition}[$\ell_2$-RO Stability]
  A learning rule $A$ is \emph{$\ell_2$-RO stable} with rate $\varepsilon(n)$,
  if for \emph{any} $S \sim {\cal D}^n$, $z' \sim {\cal D}$ and $i \in [n]$,
  $\| A(S^{(i)}) - A(S) \|_2 \le \varepsilon(n)$.
\end{definition}
Astute readers may realize that this is nothing more than
a \emph{rephrasing of the $\ell_2$-sensitivity} of a query
(Definition~\ref{def:l2-sensitivity}).
Thus, in the spirit of the output perturbation method
mentioned in Section~\ref{preliminaries:dp},
if one can bound $\ell_2$-RO stability,
then we only need to inject a small amount of noise for differential privacy.

If $A$ is $\ell_2$-RO stable, then adding a small amount of noise
to its output ensures differential privacy,
and thus strongly-uniform-RO stability.
However, without the Lipschitz condition,
the resulting hypothesis might be useless.
This is because a small distance (in $\ell_2$-norm) to $A(S)$
could give significant change in loss.
This presents a barrier for proving learnability for the private mechanism.
Thus in the following, we will restrict ourselves back to
the setting discussed in Section~\ref{preliminaries:setting}
where we assume that $\ell$ is convex and Lipschitz (in $w$).

We now move on to quantifying the amount of noise needed
for differential privacy. Specifically, we will show that for strongly-convex
learning tasks, the ``scale of the noise'' we need is roughly only
$O_{d,\varepsilon}(1/n)$ where $n$ is the training set size
(the big-$O$ notation hides a constant
that depends on the number of features $d$, and the DP parameter $\varepsilon$.).
This means that as training set size increases,
the noise we need vanishes to zero for a fixed model and $\varepsilon$-DP.
By contrast, for the functional mechanism,
the ``scale of the noise'' is $O_{d, \varepsilon}(1)$.
The following two lemmas are due to Shalev-Shwartz et al.~\cite{SSSSS09}.
We include their proofs in the appendix for completeness.
\begin{lemma}[Exchanging Lemma]
  \label{lemma:exchanging}
  Let $A$ be a learning rule such that
  $A(S) = \arg\min_w\vartheta_S(w)$,
  where $\vartheta_S(w) = L_S(w) + \varrho(w)$
  and $\varrho(w)$ is a regularizer.
  For any $S \sim {\cal D}^n$, $i \in [n]$ and $z' \sim {\cal D}$,
  \begin{align*}
    \vartheta_S(u) - \vartheta_S(v)
    \le \frac{\ell(v, z') - \ell(u, z')}{n}
    + \frac{\ell(u, z_i) - \ell(v, z_i)}{n},
  \end{align*}
  where $u = A(S^{(i)})$ and $v = A(S)$.
\end{lemma}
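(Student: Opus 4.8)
The plan is to exploit the optimality condition defining $u$ together with the fact that the two regularized objectives $\vartheta_S$ and $\vartheta_{S^{(i)}}$ differ only in a single summand of the empirical loss. First I would write out precisely how the objectives relate. Since $S$ and $S^{(i)}$ agree on every coordinate except the $i$-th (where $z_i$ is replaced by $z'$), and the regularizer $\varrho$ is data-independent, all the common loss terms and the regularizer cancel in the difference, leaving
\begin{align*}
  \vartheta_{S^{(i)}}(w) - \vartheta_S(w)
  = L_{S^{(i)}}(w) - L_S(w)
  = \frac{\ell(w,z') - \ell(w,z_i)}{n}
\end{align*}
for every $w \in {\cal H}$.

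Next I would invoke the optimality of $u = A(S^{(i)})$ against the competitor $v$: because $u$ minimizes $\vartheta_{S^{(i)}}$ over ${\cal H}$, we have $\vartheta_{S^{(i)}}(u) \le \vartheta_{S^{(i)}}(v)$. Substituting the identity above on both sides to rewrite each occurrence of $\vartheta_{S^{(i)}}$ as $\vartheta_S$ plus its correction term, and then rearranging so that $\vartheta_S(u) - \vartheta_S(v)$ is isolated on the left, produces exactly the claimed inequality
\begin{align*}
  \vartheta_S(u) - \vartheta_S(v)
  \le \frac{\ell(v, z') - \ell(u, z')}{n}
    + \frac{\ell(u, z_i) - \ell(v, z_i)}{n}.
\end{align*}

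I do not expect a genuine obstacle here: once the single-coordinate difference of the two objectives is spelled out, the lemma is a one-line consequence of the minimizer property. The only point requiring care is to note that the regularizer, being independent of the data, drops out of the difference, so that the optimality inequality for $\vartheta_{S^{(i)}}$ transfers cleanly into a statement purely about $\vartheta_S$. It is worth observing that the argument uses \emph{only} that $u$ is a minimizer of $\vartheta_{S^{(i)}}$ (the optimality of $v$ is not even needed for this particular inequality), and it invokes neither convexity nor Lipschitzness of $\ell$; those hypotheses enter only in the subsequent lemmas that bound the right-hand side to convert this exchange inequality into an actual $\ell_2$-RO stability rate.
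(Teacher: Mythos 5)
Your proposal is correct and follows essentially the same argument as the paper's proof: both rest on the observation that $\vartheta_S$ and $\vartheta_{S^{(i)}}$ differ only in the single exchanged loss term (scaled by $1/n$), combined with the optimality of $u$ for $\vartheta_{S^{(i)}}$, i.e.\ $\vartheta_{S^{(i)}}(u) \le \vartheta_{S^{(i)}}(v)$. The paper merely organizes the algebra slightly differently, expanding $\vartheta_S(u) - \vartheta_S(v)$ in terms of $\vartheta_{S^{(i)}}(u) - \vartheta_{S^{(i)}}(v)$ and dropping that nonpositive quantity, which is the same rearrangement you perform.
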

Intuitively, this lemma concerns about the behavior of a learning rule $A$
on neighboring training sets. $A$ is a regularized learning rule:
Its objective function is in the form of empirical risk $L_S(w)$
plus regularization error $\varrho(w)$
($\varrho(\cdot)$ is called a regularizer).
More specifically, this lemma upper bounds
\emph{the difference between the objective values of $u$ and $v$},
that is $\vartheta_S(u)$ and $\vartheta_S(v)$,
in terms of instance losses on the specific two instances that get exchanged.

Recall that our goal is to upper bound $\|u - v\|$.
The following lemma accomplishes this task by
upper bounding the norm of the difference
by the difference of the objective values of $u$ and $v$.
\begin{lemma}
  \label{lemma:bound-norm-by-bounding-objective}
  Let $A$ be a rule where $A(S) = \arg\min_w\vartheta_S(w)$
  and $\vartheta_S(w)$ is $\lambda$-strongly convex in $w$.
  Then for any $S \sim {\cal D}^n$, $i \in [n]$ and $z' \sim {\cal D}$,
  $\frac{\lambda}{2} \| u - v\|^2 \le \vartheta_S(u) - \vartheta_S(v),$
  where $u = A(S^{(i)})$, $v = A(S)$.
\end{lemma}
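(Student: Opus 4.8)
The plan is to recognize this as the standard quadratic-growth property of strongly convex functions: if $v$ minimizes a $\lambda$-strongly convex function $f$, then the suboptimality $f(u) - f(v)$ of any competitor $u$ dominates $\tfrac{\lambda}{2}\|u-v\|^2$. Here $f = \vartheta_S$ and $v = A(S)$ is its global minimizer by definition of $A$, while $u = A(S^{(i)})$ plays the role of an \emph{arbitrary} comparison point; crucially, I will not need that $u$ is itself a minimizer of anything. I would prove the inequality directly from the paper's definition of $\lambda$-strong convexity rather than invoking first-order (gradient) conditions, since the paper is careful not to assume $\ell$, and hence $\vartheta_S$, is differentiable.

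First I would fix $\alpha \in (0,1)$ and apply the strong-convexity inequality to the convex combination $w_\alpha = \alpha u + (1-\alpha)v$, giving $\vartheta_S(w_\alpha) \le \alpha \vartheta_S(u) + (1-\alpha)\vartheta_S(v) - \tfrac{\lambda}{2}\alpha(1-\alpha)\|u-v\|^2$. Next I would use the defining property of $v = A(S) = \arg\min_w \vartheta_S(w)$ to lower-bound the left side by $\vartheta_S(v)$, obtaining $\vartheta_S(v) \le \alpha \vartheta_S(u) + (1-\alpha)\vartheta_S(v) - \tfrac{\lambda}{2}\alpha(1-\alpha)\|u-v\|^2$. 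Subtracting $(1-\alpha)\vartheta_S(v)$ from both sides and dividing through by $\alpha > 0$ yields $\vartheta_S(v) \le \vartheta_S(u) - \tfrac{\lambda}{2}(1-\alpha)\|u-v\|^2$. Finally I would let $\alpha \to 0$, which gives $\tfrac{\lambda}{2}\|u-v\|^2 \le \vartheta_S(u) - \vartheta_S(v)$, exactly the claimed bound.

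The argument is essentially routine, so there is no serious obstacle; the only points requiring care are structural rather than computational. The step that does the real work is replacing $\vartheta_S(w_\alpha)$ by $\vartheta_S(v)$, which is the \emph{sole} place the minimality of $A(S)$ is used, and it is what makes the role of $u$ and $v$ asymmetric in the statement. I would also note that the convex combination $w_\alpha$ lies in $\mathcal{H}$ because $\mathcal{H}$ is closed and convex, so $\vartheta_S$ is defined there and the minimizer $v$ is a global (indeed unique) minimizer. Taking the limit $\alpha \to 0$ is harmless since the intermediate inequality holds for every $\alpha \in (0,1)$ and the right-hand side depends on $\alpha$ only through the continuous factor $(1-\alpha)$.
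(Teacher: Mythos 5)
Your proof is correct and is essentially the same argument as the paper's: apply the definition of $\lambda$-strong convexity to a convex combination of $u$ and $v$, use the minimality of $v=A(S)$ to lower-bound the combination's objective value by $\vartheta_S(v)$, rearrange, and pass to the limit. The only cosmetic difference is your parametrization (weight $\alpha$ on $u$, limit $\alpha \to 0$) versus the paper's (weight $\alpha$ on $v$, limit $\alpha \to 1$), which are identical under the relabeling $\alpha \leftrightarrow 1-\alpha$.
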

To see this Lemma, we note that $v$ is a \emph{minimizer} of $\vartheta_S$
by the definition of the learning rule. Therefore by the definition of
strong convexity, we have that for any $\alpha > 0$,
\begin{align*}
  \vartheta_S(v) &\le \vartheta_S((1-\alpha)v + \alpha u) \\
  &\le \alpha \vartheta(v) + (1-\alpha)\vartheta(u)
  - \frac{\lambda}{2}\alpha(1-\alpha)\|u-v\|^2
\end{align*}
The lemma is then proved by rearranging and tending $\alpha$ to $1$.
Intuitively, this lemma says that as long as the objective function
$\vartheta_S(w)$ of $A$ is good (that is, strongly convex),
then one can upper bound the difference between $u$ and $v$ \emph{in norm}
by the difference between the {\em objective values} of $u$ and $v$.
Combining these two lemmas, we can prove the following main theorem
in this section.
\begin{theorem}
  \label{thm:l2-stability-Lip}
  Let $A$ be a learning rule with a $\lambda$-strongly convex
  objective loss function
  $\vartheta_S(w) = L_S(w) + \varrho(w)$
  where $\varrho(w)$ is a regularizer.
  Assume further that for any $z \in Z$,
  $\ell(\cdot, z)$ is $\rho$-Lipschitz.
  Then $A$ is $\frac{4\rho}{\lambda n}$ $\ell_2$-RO stable.
\end{theorem}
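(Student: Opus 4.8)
The plan is to chain the two preceding lemmas and then invoke the Lipschitz hypothesis, with no further machinery needed. Adopting the notation $u = A(S^{(i)})$ and $v = A(S)$ from Lemmas~\ref{lemma:exchanging} and~\ref{lemma:bound-norm-by-bounding-objective}, I observe that these two lemmas bound the \emph{same} quantity $\vartheta_S(u) - \vartheta_S(v)$ from opposite sides. Since $\vartheta_S = L_S + \varrho$ is $\lambda$-strongly convex by hypothesis, Lemma~\ref{lemma:bound-norm-by-bounding-objective} gives the lower bound $\frac{\lambda}{2}\|u-v\|^2 \le \vartheta_S(u) - \vartheta_S(v)$, while the Exchanging Lemma gives an upper bound in terms of the instance losses at the two exchanged samples. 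Stacking them yields
\begin{align*}
  \frac{\lambda}{2}\|u-v\|^2
  \le \frac{\ell(v,z') - \ell(u,z')}{n}
  + \frac{\ell(u,z_i) - \ell(v,z_i)}{n}.
\end{align*}

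The next step is to control the right-hand side using that $\ell(\cdot, z)$ is $\rho$-Lipschitz for every fixed $z$. Each numerator is a difference of the losses of the two hypotheses $u$ and $v$ evaluated on a single sample, so bounding each signed difference by its absolute value and then by the Lipschitz estimate gives $\ell(v,z') - \ell(u,z') \le \rho\|u-v\|$ and $\ell(u,z_i) - \ell(v,z_i) \le \rho\|u-v\|$. Substituting reduces the inequality to
\begin{align*}
  \frac{\lambda}{2}\|u-v\|^2 \le \frac{2\rho}{n}\|u-v\|.
\end{align*}
If $u = v$ the stability bound holds trivially; otherwise I divide through by $\|u-v\| > 0$ and rearrange to conclude $\|u-v\| \le \frac{4\rho}{\lambda n}$, which is exactly the claimed $\ell_2$-RO stability rate. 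Note that the only properties used are the assumed Lipschitzness of $\ell$ and strong convexity of $\vartheta_S$; in particular no smoothness or differentiability of $\ell$ is required, consistent with the setting of Section~\ref{preliminaries:setting}.

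I do not expect any genuinely hard step, since the real content resides in the two cited lemmas and this theorem is essentially their mechanical combination. The single point meriting care is that the Exchanging Lemma delivers \emph{signed} differences of losses rather than absolute values, so one must confirm that the Lipschitz bound is applied in the right direction before passing to $\|u-v\|$; this is immediate because any real number is at most its absolute value, but it is worth flagging so that the factor of $2$ (hence the final constant $4$) is tracked correctly.
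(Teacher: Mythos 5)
Your proof is correct and follows exactly the same route as the paper's: chain Lemma~\ref{lemma:bound-norm-by-bounding-objective} with the Exchanging Lemma, apply the $\rho$-Lipschitz bound to the two signed loss differences, and divide through by $\|u-v\|$. Your explicit handling of the degenerate case $u=v$ is a small point of extra care that the paper's ``rearranging'' leaves implicit.
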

This theorem says that if both the \emph{instance loss function}
and the \emph{objective function} are well behaved
($\ell$ is $\rho$-Lipschitz and $\vartheta$ is strongly convex),
then the learning rule $A$ is roughly $(1/n)$-norm stable (in other words,
the sensitivity is $1/n$, which vanishes to $0$ as training set size $n$ grows).
\cbstart
In the case when $\ell$ is already $\lambda$-strongly convex,
we do not need a regularizer so one can set $\varrho(w) = 0$,
hence a natural algorithm to ensure differential privacy in this case
is to directly perturb the empirical risk minimizer
with noise calibrated to its norm stability.
Our discussion so far thus leads to
Algorithm~\ref{alg:output-perturbation-strongly-convex}

\begingroup
\removelatexerror
\begin{algorithm}[H]
  \removelatexerror
  \KwIn{Privacy budget: $\varepsilon_p > 0$.
    Training set $S = \{(x_i,y_i)\}_{i=1}^n$.}
  \KwOut{A hypothesis $w$.}
  \BlankLine
  Solve the empirical risk minimization $\Bar{w} = \arg\min_w L_S(w).$\\
  Draw a noise vector $\kappa \in \Real^d$ according to a distribution
  with density function
  $p(\kappa) \propto
  \exp\left(-\frac{\lambda n \varepsilon_p\|\kappa\|_2}{4\rho}\right).$\\
  Output $\Bar{w} + \kappa$.
  \caption{Output Perturbation for strongly-convex loss function:
    $\ell(w, z)$ is $\lambda$-strongly convex and $\rho$-Lipschitz in $w$,
    for every $z \in Z$.}
  \label{alg:output-perturbation-strongly-convex}
\end{algorithm}
\endgroup

\begin{theorem}
  \label{thm:dp-alg-strongly-convex}
  Let $(Z, {\cal H}, \ell)$ be a learning problem where $\ell$
  is $\lambda$-strongly convex and $\rho$-Lipschitz.
  Then Algorithm~\ref{alg:output-perturbation-strongly-convex}
  is $\varepsilon$-differentially private.
\end{theorem}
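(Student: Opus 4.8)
The plan is to reduce the privacy claim to the output-perturbation result of Dwork et al.\ (Theorem~\ref{fact:laplacian-dp}) by viewing the unperturbed empirical risk minimizer as a query whose $\ell_2$-sensitivity is exactly what the noise in Algorithm~\ref{alg:output-perturbation-strongly-convex} is calibrated against. Concretely, I regard $q(S) = \bar{w} = \arg\min_w L_S(w)$ as the vector-valued query to be released, and the algorithm's output $\bar{w} + \kappa$ as its output-perturbed release.

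First I would verify the hypotheses of Theorem~\ref{thm:l2-stability-Lip} for the choice $\vartheta_S(w) = L_S(w)$, which is legitimate here because $\ell$ is already strongly convex and so no regularizer is needed (that is, $\varrho \equiv 0$). The one thing to check is that $\vartheta_S = L_S$ is itself $\lambda$-strongly convex in $w$. Since $L_S(w) = \frac{1}{n}\sum_{i=1}^n \ell(w, z_i)$ is an average of the $\lambda$-strongly convex functions $\ell(\cdot, z_i)$, averaging the defining strong-convexity inequality over the $n$ summands shows $L_S$ is again $\lambda$-strongly convex. With $\ell(\cdot, z)$ also $\rho$-Lipschitz, Theorem~\ref{thm:l2-stability-Lip} then gives that $q$ is $\frac{4\rho}{\lambda n}$ $\ell_2$-RO stable. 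Because $\ell_2$-RO stability is a worst-case bound over every $S$, every index $i$, and every replacement $z'$, and the bounded-DP neighboring relation (Definition~\ref{def:bounded-dp}) is precisely the replace-one operation, this translates into $\Delta_2(q) \le \frac{4\rho}{\lambda n}$.

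Next I would match the algorithm's noise to the calibrated density in Theorem~\ref{fact:laplacian-dp}. The drawn $\kappa$ has density $p(\kappa) \propto \exp\!\left(-\frac{\lambda n \varepsilon_p}{4\rho}\|\kappa\|_2\right)$, which is exactly $\exp\!\left(-\frac{\varepsilon_p}{B}\|\kappa\|_2\right)$ for $B = \frac{4\rho}{\lambda n}$, the sensitivity bound just derived. Invoking Theorem~\ref{fact:laplacian-dp} with privacy parameter $\varepsilon_p$ and sensitivity $B$ then yields $\varepsilon_p$-differential privacy, which is the asserted guarantee. If a self-contained argument is preferred, one bounds the log-ratio of the two output densities at any point $t$ by the triangle inequality: it is at most $\frac{\lambda n \varepsilon_p}{4\rho}\,\|q(D) - q(D')\|_2 \le \frac{\lambda n \varepsilon_p}{4\rho}\cdot\frac{4\rho}{\lambda n} = \varepsilon_p$.

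The step needing the most care is the calibration in the previous paragraph: Theorem~\ref{fact:laplacian-dp} as stated calibrates noise to the \emph{exact} sensitivity $\Delta_2(q)$, whereas stability only furnishes the upper bound $\frac{4\rho}{\lambda n}$. I would stress that calibrating against an over-estimate of the sensitivity merely injects more noise and can only strengthen privacy; formally, the log-density-ratio bound $\frac{\varepsilon_p}{B}\Delta_2(q) \le \varepsilon_p$ holds for every $B \ge \Delta_2(q)$, so exactness is unnecessary. A secondary point worth flagging is that the stability estimate must be genuinely worst-case (not merely almost-sure over $S \sim {\cal D}^n$) to control $\Delta_2(q)$; this holds because the bound in Theorem~\ref{thm:l2-stability-Lip} is obtained deterministically for every triple $(S,i,z')$ through the Exchanging Lemma (Lemma~\ref{lemma:exchanging}) and Lemma~\ref{lemma:bound-norm-by-bounding-objective}.
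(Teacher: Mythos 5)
Your proposal follows exactly the paper's own route: set $\vartheta_S = L_S$ (valid since $\ell$ is already $\lambda$-strongly convex, so $\varrho \equiv 0$), invoke Theorem~\ref{thm:l2-stability-Lip} to bound the $\ell_2$-sensitivity of the empirical risk minimizer by $\frac{4\rho}{\lambda n}$, and plug that bound into Theorem~\ref{fact:laplacian-dp}. Your version is correct and in fact more careful than the paper's sketch, since you explicitly verify that averaging preserves strong convexity, that the stability bound is worst-case over all $(S,i,z')$ rather than distributional, and that calibrating noise to an upper bound on the sensitivity can only strengthen the privacy guarantee.
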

To see this, we note that $L_S(w)$ is $\lambda$-strongly convex
because $\ell$ is, so we can set the objective function $\vartheta(w) = L_S(w)$.
$\vartheta(w)$ is $\lambda$-strongly convex and $\rho$-Lipschitz,
so Theorem~\ref{thm:l2-stability-Lip} bounds its $\ell_2$-norm stability.
The proof is then completed by plugging the stability bound into
Theorem~\ref{fact:laplacian-dp}.

If the loss function is only convex (instead of being strongly convex),
then the idea is to use a strongly convex regularizer to
make the objective function strongly convex.
Specifically, we use the Tikhonov regularizer $\varrho(w) = \lambda\|w\|^2/2$
(indeed, any strongly convex regularizer applies).
This gives Algorithm~\ref{alg:output-perturbation-convex}.

\begingroup
\removelatexerror
\begin{algorithm}[H]
  \KwIn{Privacy budget: $\varepsilon_p > 0$.
    Regularization parameter: $\lambda > 0$.
    Boundedness parameter: $R > 0$.
    Training data: $S = \{(x_i,y_i)\}_{i=1}^n$.}
  \KwOut{A hypothesis $w$.}
  \BlankLine
  Solve the regularized empirical risk minimization problem
  $\Bar{w}
  = \arg\min_{\|w\| \le R} \left(L_S(w) + \frac{\lambda}{2}\|w\|^2\right).$\\
  Draw a noise vector $\kappa \in \Real^d$ according to a distribution where
  $p(\kappa) \propto
  \exp\left(
    -\frac{\lambda n \varepsilon_p\|\kappa\|_2}{4(\rho+\lambda R)}\right).$\\
  Output $\Bar{w} + \kappa$.
  \caption{Output Perturbation for general-convex loss function:
    $\ell(w, z)$ is convex and $\rho$-Lipschitz in $w$, for every $z \in Z$.}
  \label{alg:output-perturbation-convex}
\end{algorithm}
\endgroup

\begin{theorem}
  \label{thm:convex-dp-alg}
  Let $(Z, {\cal H}, \ell)$ be a learning problem where $\ell$
  is convex and $\rho$-Lipschitz.
  Suppose further that the hypothesis space is $\cal H$ is $R$-bounded.
  Then Algorithm~\ref{alg:output-perturbation-convex}
  is $\varepsilon$-differentially private.
\end{theorem}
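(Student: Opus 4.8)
The plan is to reduce the general convex case to the strongly convex case already settled by Theorem~\ref{thm:dp-alg-strongly-convex}, by absorbing the Tikhonov regularizer into a modified instance loss. Concretely, define $\hat\ell(w,z) = \ell(w,z) + \frac{\lambda}{2}\|w\|^2$, so that the objective minimized in Algorithm~\ref{alg:output-perturbation-convex} is exactly $\vartheta_S(w) = \frac{1}{n}\sum_{i=1}^n \hat\ell(w,z_i) = L_S(w) + \frac{\lambda}{2}\|w\|^2$ and $\bar w = \arg\min_{\|w\|\le R}\vartheta_S(w)$. First I would verify the two structural properties of $\hat\ell$ that the stability machinery needs. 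For strong convexity, since each $\ell(\cdot,z)$ is convex and $\frac{\lambda}{2}\|w\|^2$ is $\lambda$-strongly convex, Lemma~\ref{lemma:convex2strongly-convex} gives that $\hat\ell(\cdot,z)$, and hence the average $\vartheta_S$, is $\lambda$-strongly convex. For the Lipschitz constant, the key computation is that on the $R$-bounded set $\cal H$ the regularizer is $\lambda R$-Lipschitz: for $\|u\|,\|v\|\le R$ one has $\frac{\lambda}{2}\bigl|\,\|u\|^2-\|v\|^2\,\bigr| \le \frac{\lambda}{2}\,\|u-v\|\,(\|u\|+\|v\|) \le \lambda R\,\|u-v\|$. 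Combined with the $\rho$-Lipschitzness of $\ell$, the triangle inequality shows $\hat\ell(\cdot,z)$ is $(\rho+\lambda R)$-Lipschitz on $\cal H$. This is exactly where the boundedness hypothesis enters, and it is what produces the $(\rho+\lambda R)$ factor appearing in the noise scale.

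With these two properties in hand, I would invoke Theorem~\ref{thm:l2-stability-Lip} with the effective instance loss $\hat\ell$ (Lipschitz constant $\rho+\lambda R$) and the $\lambda$-strongly convex objective $\vartheta_S$, concluding that $A$ is $\frac{4(\rho+\lambda R)}{\lambda n}$-$\ell_2$-RO stable; equivalently, by Definition~\ref{def:l2-sensitivity}, the query $S\mapsto\bar w$ has $\ell_2$-sensitivity at most $\frac{4(\rho+\lambda R)}{\lambda n}$. It is worth stressing that this is a worst-case bound over all neighboring training sets: the underlying Exchanging Lemma (Lemma~\ref{lemma:exchanging}) and the norm-from-objective bound (Lemma~\ref{lemma:bound-norm-by-bounding-objective}) are purely convex-analytic and hold for every $S$, every $z'$, and every $i$, so no ``with high probability'' caveat enters, which is precisely what a differential privacy guarantee requires.

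I would then feed this sensitivity into the output-perturbation theorem. Setting $\Delta_2 = \frac{4(\rho+\lambda R)}{\lambda n}$ in Theorem~\ref{fact:laplacian-dp}, the prescribed noise density $p(\kappa)\propto\exp\bigl(-\frac{\varepsilon_p}{\Delta_2}\|\kappa\|_2\bigr)$ is precisely the density $p(\kappa)\propto\exp\bigl(-\frac{\lambda n \varepsilon_p}{4(\rho+\lambda R)}\|\kappa\|_2\bigr)$ used in Algorithm~\ref{alg:output-perturbation-convex}, so releasing $\bar w + \kappa$ is $\varepsilon_p$-differentially private, as claimed.

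The step I expect to require the most care is handling the \emph{constraint} $\|w\|\le R$ in the minimization. Theorem~\ref{thm:l2-stability-Lip}, as stated, concerns the unconstrained minimizer $\arg\min_w\vartheta_S(w)$, whereas Algorithm~\ref{alg:output-perturbation-convex} minimizes over the convex set $\{\|w\|\le R\}$. I would check that both ingredients survive the projection: the Exchanging Lemma only uses that $u = A(S^{(i)})$ is optimal over the feasible set and that $v = A(S)$ is feasible (and symmetrically), while Lemma~\ref{lemma:bound-norm-by-bounding-objective} only evaluates $\vartheta_S$ at convex combinations $(1-\alpha)v+\alpha u$, which remain feasible because $\{\|w\|\le R\}$ is convex. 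Both therefore extend verbatim, and the constraint is in fact what legitimizes the $\lambda R$-Lipschitz bound on the regularizer, so the two subtleties resolve together.
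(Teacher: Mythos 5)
Your proof is correct and takes essentially the same route as the paper: the paper likewise defines $\bar{\ell}(w,z) = \ell(w,z) + \lambda\|w\|^2/2$, observes that it is $\lambda$-strongly convex and $(\rho+\lambda R)$-Lipschitz over the $R$-bounded ${\cal H}$, and reduces to Theorem~\ref{thm:dp-alg-strongly-convex}, i.e., to Theorem~\ref{thm:l2-stability-Lip} fed into Theorem~\ref{fact:laplacian-dp}. Your two extra verifications --- the explicit $\lambda R$ Lipschitz bound for the regularizer on the ball, and the check that Lemmas~\ref{lemma:exchanging} and~\ref{lemma:bound-norm-by-bounding-objective} remain valid for the \emph{constrained} minimizer over the convex set $\{\|w\|\le R\}$ --- are details the paper leaves implicit, and they tighten rather than change its argument.
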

The easiest way to see this theorem is to define a new loss function
$\Bar{\ell}(w, z) = \ell(w, z) + (\lambda\|w\|^2)/2$.
Then $\Bar{\ell}$ is $\lambda$-strongly convex
and $(\rho+\lambda R)$-Lipschitz over $\cal H$,
and the theorem directly follows from Theorem~\ref{thm:dp-alg-strongly-convex}.
We remark that Algorithm~\ref{alg:output-perturbation-convex} and
Theorem~\ref{thm:convex-dp-alg} can be strengthened based on
Theorem~\ref{thm:l2-stability-Lip} and
specifically do not rely on the boundedness condition.
However, since our analysis of generalization error for this case
(in the next section) critically relies on the boundedness condition,
the algorithm and its privacy guarantee are stated in the current form.\cbend


\subsection{Generalization Error}
\label{sec:dp-generalization}
Until now, we have only talked about
ensuring differential privacy with a small amount of noise,
and have not said anything about
whether this small amount of noise will lead to
a model with ``good utility'',
which is usually measured by \emph{generalization error} in learning theory.
We accomplish this in this section.
At a high level, we will show that for strongly convex learning tasks,
output perturbation produces hypotheses that are roughly $(1/n)$-away from the optimal.
For general convex learning tasks, this degrades to roughly $1/\sqrt{n}$.

For notational convenience,
throughout this section we let $A$ denote a deterministic learning mechanism,
and $\widetilde{A}$ be its output perturbation counterpart.
We begin with a general lemma.
\begin{lemma}
  \label{lemma:Lipschitz-condition-and-generalization-error}
  Suppose that for any $z \sim {\cal D}$, 
  $\ell(w, z)$ is $\rho$-Lipschitz in $w$.
  If with probability at least $1-\gamma$ over $w \sim \widetilde{A}(S)$,
  $$\|w - A(S)\|_2 \le \kappa(n, \gamma),\footnote{We abuse the notation
    $\kappa$ to remind readers that
    this quantity is related to the noise vector.}$$
  then with probability at least $1-\gamma$ over $w \sim \widetilde{A}(S)$,
  \begin{align*}
    |L_{\cal D}(w) - L_{\cal D}(A(S))| \le \rho\kappa(n, \gamma).
  \end{align*}
\end{lemma}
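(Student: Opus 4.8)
The plan is to exploit the fact that the generalization error $L_{\cal D}$ already integrates out the data randomness, so that for any \emph{fixed} realization of the noise the quantity $L_{\cal D}(w)$ is simply a deterministic number. I would fix the training set $S$, so that $A(S)$ is a fixed point in $\cal H$, and regard the only remaining randomness as that of the perturbation $w \sim \widetilde{A}(S)$. The hypothesis of the lemma furnishes an event
$${\cal E} = \{\, w : \|w - A(S)\|_2 \le \kappa(n,\gamma) \,\}$$
with $\Pr_{w \sim \widetilde{A}(S)}[{\cal E}] \ge 1-\gamma$, and it suffices to establish the deterministic bound $|L_{\cal D}(w) - L_{\cal D}(A(S))| \le \rho\,\kappa(n,\gamma)$ for \emph{every} $w \in {\cal E}$; the conclusion then holds on the same event, and hence with probability at least $1-\gamma$.

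For the deterministic step, fix any $w \in {\cal E}$ and expand the difference of generalization errors using linearity of expectation:
\begin{align*}
  L_{\cal D}(w) - L_{\cal D}(A(S))
  = \Exp_{z \sim {\cal D}}\big[\ell(w, z) - \ell(A(S), z)\big].
\end{align*}
I would then apply the triangle inequality (equivalently Jensen's inequality for $|\cdot|$) to move the absolute value inside the expectation, and invoke $\rho$-Lipschitzness of $\ell(\cdot, z)$ \emph{for each fixed $z$}:
\begin{align*}
  |L_{\cal D}(w) - L_{\cal D}(A(S))|
  \le \Exp_{z \sim {\cal D}}\big[|\ell(w, z) - \ell(A(S), z)|\big]
  \le \Exp_{z \sim {\cal D}}\big[\rho \|w - A(S)\|_2\big].
\end{align*}
Since $\|w - A(S)\|_2 \le \kappa(n,\gamma)$ holds by the choice of $w \in {\cal E}$ and this norm does not depend on $z$, the right-hand side is at most $\rho\,\kappa(n,\gamma)$, which is exactly the claimed bound.

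I do not expect a genuine obstacle here: the argument is elementary and uses only pointwise Lipschitzness together with monotonicity of the expectation. The one place that merits care is bookkeeping of the two distinct sources of randomness --- the draw of $w$ from the noise distribution and the expectation over $z \sim {\cal D}$ hidden inside $L_{\cal D}$. The cleanest way to keep these separate is, as above, to condition on the noise event ${\cal E}$ first and treat $L_{\cal D}(\cdot)$ as a deterministic functional of its argument; the Lipschitz bound is then entirely data-independent because it is driven solely by $\|w - A(S)\|_2$, a property of the realized $w$ alone.
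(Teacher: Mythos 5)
Your proposal is correct and follows essentially the same argument as the paper's proof: expand $|L_{\cal D}(w) - L_{\cal D}(A(S))|$ as the absolute value of an expectation over $z \sim {\cal D}$, move the absolute value inside by Jensen's inequality, apply pointwise $\rho$-Lipschitzness to get the bound $\rho\|w - A(S)\|_2$, and conclude on the probability-$(1-\gamma)$ event. Your explicit conditioning on the event $\cal E$ first is merely a tidier ordering of the same bookkeeping the paper performs at the end.
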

This lemma translates the closeness between two hypotheses \emph{in norm}
to the closeness in \emph{generalization error}.
More specifically, note that the randomized learning rule
$\widetilde{A}$ induces a \emph{distribution} $\widetilde{A}(S)$
over the hypothesis space.
This lemma says as long as a hypothesis sampled from
$\widetilde{A}(S)$ is close to $A(S)$ (a single hypothesis) {\em in norm},
then these two hypotheses are close in their generalization error.
Note that this closeness is controlled by the Lipschitz constant of
the instance loss function $\ell$.

Combing this lemma with Theorem~\ref{thm:l2-stability-Lip}
from the last section, which bounds the norm stability,
we have the following general theorem upper bounding
the generalization error of our method.
\begin{theorem}
  \label{thm:generalization-error-machinery}
  Let $({\cal H}, Z, \ell)$ be a learning problem that is agnostically learnable
  by a deterministic learning algorithm $A$ with rate $\varepsilon(n,\delta)$.
  Suppose that $\ell(w, z)$ is $\rho$-Lipschitz in $w$ for any $z \in Z$.
  Let $\cal D$ be a distribution over $Z$.
  Finally, suppose that for any $S \sim {\cal D}^n$,
  with probability at least $1-\gamma$ over $w \sim \widetilde{A}(S)$,
  \begin{align*}
    \|w - A(S)\|_2 \le \kappa(n, \gamma).
  \end{align*}
  Then with probability at least $1-\delta-\gamma$ over $S \sim {\cal D}^n$ and
  $w \sim \widetilde{A}(S)$, we have
  $L_{\cal D}(w) - L_{\cal D}^*
    \le \varepsilon(n,\delta) + \rho\kappa(n, \gamma).$
\end{theorem}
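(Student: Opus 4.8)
The plan is to decompose the excess generalization error of the perturbed hypothesis $w$ into two pieces and bound each separately using one of the two inputs the theorem hands us. Concretely, I would write
\begin{align*}
  L_{\cal D}(w) - L_{\cal D}^*
  = \bigl(L_{\cal D}(w) - L_{\cal D}(A(S))\bigr)
  + \bigl(L_{\cal D}(A(S)) - L_{\cal D}^*\bigr).
\end{align*}
The second bracket is the excess risk of the \emph{deterministic} minimizer $A(S)$ over the best achievable $L_{\cal D}^*$, and the first bracket is the further degradation caused by injecting noise to pass from $A(S)$ to $w \sim \widetilde{A}(S)$. Each bracket is controlled by a hypothesis already available.

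First I would bound the second bracket via learnability. Since the problem is agnostically learnable by $A$ with rate $\varepsilon(n,\delta)$, the definition of learnability gives directly that with probability at least $1-\delta$ over the draw $S \sim {\cal D}^n$, we have $L_{\cal D}(A(S)) \le L_{\cal D}^* + \varepsilon(n,\delta)$. Call this event $E_1$; note it depends only on $S$. Next I would bound the first bracket via Lemma~\ref{lemma:Lipschitz-condition-and-generalization-error}. Its Lipschitz hypothesis is exactly our assumption that $\ell(\cdot, z)$ is $\rho$-Lipschitz for every $z$, and its norm-closeness hypothesis is exactly the assumed bound $\|w - A(S)\|_2 \le \kappa(n,\gamma)$ (holding with probability at least $1-\gamma$ over $w$). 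The lemma therefore yields, with probability at least $1-\gamma$ over $w \sim \widetilde{A}(S)$, that $|L_{\cal D}(w) - L_{\cal D}(A(S))| \le \rho\kappa(n,\gamma)$. Call this event $E_2$.

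Finally I would combine the two via a union bound, and on the intersection $E_1 \cap E_2$ both brackets obey their respective bounds, producing $L_{\cal D}(w) - L_{\cal D}^* \le \varepsilon(n,\delta) + \rho\kappa(n,\gamma)$. The one point that warrants care — and the closest thing here to an obstacle — is that $E_1$ and $E_2$ live over \emph{different} sources of randomness: $E_1$ over $S$ alone, while $E_2$ is a statement over the noise $w$ conditioned on $S$. To handle this cleanly I would bound the failure probability of the conjunction as $\Pr[\neg E_1 \cup \neg E_2] \le \Pr[\neg E_1] + \Pr[\neg E_2]$, where the probabilities are over the joint draw of $(S,w)$. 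The first term is at most $\delta$ since $E_1$ depends only on $S$; the second term, obtained by integrating the conditional bound $\Pr[\neg E_2 \mid S] \le \gamma$ (which holds for \emph{every} $S$) over the distribution of $S$, is at most $\gamma$. Hence $E_1 \cap E_2$ holds with probability at least $1-\delta-\gamma$, which is precisely the claimed confidence. The analytic substance is entirely carried by Lemma~\ref{lemma:Lipschitz-condition-and-generalization-error} and the learnability assumption; the remaining work is this bookkeeping of the two failure probabilities.
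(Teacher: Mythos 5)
Your proposal is correct and follows essentially the same route as the paper's own proof: the identical decomposition of $L_{\cal D}(w) - L_{\cal D}^*$ into the noise term and the excess-risk term, bounded respectively by Lemma~\ref{lemma:Lipschitz-condition-and-generalization-error} and the learnability assumption, then combined by a union bound. Your extra care in justifying the union bound across the two different sources of randomness (the draw of $S$ versus the noise $w$ conditioned on $S$) is a point the paper leaves implicit, and it is handled correctly.
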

In the rest of this section we give concrete bounds
for different types of instance loss function.
\vskip 10pt

\noindent\textbf{Strongly Convex Loss}.
We now bound generalization error of our method for the case
where the instance loss function $\ell$ is strongly convex.
We will use the following theorem from stability theory:
\begin{theorem}[Theorem 6,~\cite{SSSSS09}]
  \label{thm:stochastic-strongly-convex-optimization}
  Consider a learning problem such that
  $\ell(w, z)$ is $\lambda$-strongly convex and $\rho$-Lipschitz in $w$.
  Then for any distribution $\cal D$ over $Z$ and any $\delta > 0$,
  with probability at least $1-\delta$ over $S \sim {\cal D}^n$,
  \begin{align*}
    L_{\cal D}(\erm(S)) - L_{\cal D}^* \le \frac{4\rho^2}{\delta\lambda n}.
  \end{align*}
  where $\erm(S)$ is the empirical risk minimizer,
  i.e. $\erm(S) = \arg\min_{w \in {\cal H}} L_S(w)$.
\end{theorem}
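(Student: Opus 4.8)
The plan is to prove this along the classical ``stability implies generalization'' route: feed the norm-stability bound already established in Theorem~\ref{thm:l2-stability-Lip} into the standard replace-one symmetry identity to get an \emph{in-expectation} excess-risk bound, and then convert it to a high-probability statement via Markov's inequality. The $1/\delta$ factor in the conclusion (as opposed to a $\log(1/\delta)$ dependence) is precisely the signature of this last Markov step.

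First I would note that since each $\ell(\cdot, z_i)$ is $\lambda$-strongly convex, the empirical loss $L_S(w) = \frac{1}{n}\sum_{i=1}^n \ell(w, z_i)$ is itself $\lambda$-strongly convex (it is an average of $\lambda$-strongly convex functions). Hence, setting the regularizer $\varrho \equiv 0$ and $\vartheta_S = L_S$ in Theorem~\ref{thm:l2-stability-Lip}, the rule $\erm$ is $\frac{4\rho}{\lambda n}$ $\ell_2$-RO stable, i.e. $\| \erm(S^{(i)}) - \erm(S) \|_2 \le \frac{4\rho}{\lambda n}$. Combining this with the $\rho$-Lipschitzness of $\ell$ upgrades it to a loss-stability bound: for every $\bar{z} \in Z$, $|\ell(\erm(S^{(i)}), \bar{z}) - \ell(\erm(S), \bar{z})| \le \rho \cdot \frac{4\rho}{\lambda n} = \frac{4\rho^2}{\lambda n}$.

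Next I would invoke the renaming/symmetry identity for i.i.d.\ samples. Since $(S, z')$ and $(S^{(i,z')}, z_i)$ are identically distributed, we have $\Exp_{S, z'}[\ell(\erm(S), z')] = \Exp_{S, z'}[\ell(\erm(S^{(i)}), z_i)]$, which lets me write the expected generalization gap as a replace-one quantity controlled by the stability rate:
\[
\Exp_{S}\!\left[L_{\cal D}(\erm(S)) - L_S(\erm(S))\right]
= \frac{1}{n}\sum_{i=1}^n \Exp_{S, z'}\!\left[\ell(\erm(S^{(i)}), z_i) - \ell(\erm(S), z_i)\right]
\le \frac{4\rho^2}{\lambda n}.
\]
Then, using optimality of the empirical minimizer against a fixed comparator $w^\ast = \arg\min_{w \in {\cal H}} L_{\cal D}(w)$ together with unbiasedness of the empirical loss on a fixed hypothesis, I get $\Exp_S[L_S(\erm(S))] \le \Exp_S[L_S(w^\ast)] = L_{\cal D}^\ast$. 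Adding the two displays yields the in-expectation excess-risk bound $\Exp_S[L_{\cal D}(\erm(S)) - L_{\cal D}^\ast] \le \frac{4\rho^2}{\lambda n}$. Finally, since $\erm(S), w^\ast \in {\cal H}$ and $w^\ast$ minimizes $L_{\cal D}$ over ${\cal H}$, the excess risk $L_{\cal D}(\erm(S)) - L_{\cal D}^\ast$ is a \emph{non-negative} random variable, so Markov's inequality gives $\Pr\!\left[L_{\cal D}(\erm(S)) - L_{\cal D}^\ast \ge \frac{4\rho^2}{\delta\lambda n}\right] \le \delta$, which is exactly the claim.

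The main obstacle is the symmetry identity connecting the expected generalization gap to the replace-one loss-stability quantity; everything else is bookkeeping once Theorem~\ref{thm:l2-stability-Lip} is in hand. I also expect the $1/\delta$ rather than logarithmic dependence to be essentially unavoidable here, because strong convexity controls the replace-one perturbation only in a first-moment sense, giving us a one-sided expectation bound but not the higher-moment control needed for exponential concentration.
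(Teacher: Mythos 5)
Your proof is correct, but there is an important contextual point: the paper never proves this statement. It is imported as Theorem~6 of \cite{SSSSS09} and used as a black box in the proof of Theorem~\ref{thm:strongly-convex-dp-generalization}, so there is no in-paper proof to compare against; the relevant comparison is with the original argument in \cite{SSSSS09}, and your proposal is essentially a faithful reconstruction of it. The chain you use is exactly the canonical one: setting $\varrho \equiv 0$ in Theorem~\ref{thm:l2-stability-Lip} (legitimate, since an average of $\lambda$-strongly convex functions is $\lambda$-strongly convex, and the paper itself makes this move when proving Theorem~\ref{thm:dp-alg-strongly-convex}) gives $\|\erm(S^{(i)}) - \erm(S)\|_2 \le 4\rho/(\lambda n)$; Lipschitzness upgrades this to replace-one loss stability $4\rho^2/(\lambda n)$; the i.i.d.\ symmetrization identity converts loss stability into the in-expectation generalization gap; ERM optimality against the fixed comparator $w^\ast$ plus unbiasedness of $L_S(w^\ast)$ gives the in-expectation excess-risk bound; and nonnegativity of the excess risk (since $\erm(S) \in {\cal H}$ and $w^\ast$ minimizes $L_{\cal D}$ over ${\cal H}$) licenses the Markov step, which is indeed the source of the $1/\delta$ rather than $\log(1/\delta)$ dependence. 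Each of these steps checks out, and your proof has the additional merit of being built entirely from machinery the paper does reproduce (the exchanging lemma, the strong-convexity lemma, and Theorem~\ref{thm:l2-stability-Lip}, all of which the paper's appendix proves and attributes to \cite{SSSSS09}), so it could serve as a self-contained in-paper proof of the cited result.
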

We are ready to prove the following theorem on generalization error.
Our bound matches the bound obtained by Bassily et al. for the same setting
(Theorem F.2,~\cite{BST14}).
\begin{theorem}
  \label{thm:strongly-convex-dp-generalization}
  Consider a learning problem such that $\ell(w, z)$ is
  $\lambda$-strongly convex and $\rho$-Lipschitz in $w$.
  Let $\varepsilon_p > 0$ be a privacy parameter for differential privacy.
  Let $\widetilde{A}$ denote Algorithm~\ref{alg:output-perturbation-strongly-convex}.
  Then for any $\delta \in (0,1)$,
  with probability $1-\delta$ over $S \sim {\cal D}^n$ and $w \sim \widetilde{A}(S)$,
  \begin{align*}
    L_{\cal D}(w) - L_{\cal D}^*
    \le O\Big(\frac{\rho^2d\ln(d/\delta)}{\lambda n \delta\varepsilon_p}\Big).
  \end{align*}
\end{theorem}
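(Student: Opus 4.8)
The plan is to instantiate the general machinery of Theorem~\ref{thm:generalization-error-machinery} with the empirical risk minimizer playing the role of the deterministic algorithm $A$ and Algorithm~\ref{alg:output-perturbation-strongly-convex} playing the role of its perturbed counterpart $\widetilde{A}$. Concretely, I would set $A(S) = \erm(S) = \arg\min_w L_S(w)$, so that the hypothesis $w$ returned by Algorithm~\ref{alg:output-perturbation-strongly-convex} is exactly $w = A(S) + \kappa$ and hence $\|w - A(S)\|_2 = \|\kappa\|_2$. The two quantities the machinery then asks for are the agnostic learning rate $\varepsilon(n,\delta)$ of $\erm$ and a high-probability bound $\kappa(n,\gamma)$ on the norm of the injected noise.

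For the learning rate, I would read off Theorem~\ref{thm:stochastic-strongly-convex-optimization}: since $\ell$ is $\lambda$-strongly convex and $\rho$-Lipschitz, the empirical risk minimizer satisfies $L_{\cal D}(\erm(S)) - L_{\cal D}^* \le \frac{4\rho^2}{\delta\lambda n}$ with probability $1-\delta$, so $\erm$ is agnostically learnable with rate $\varepsilon(n,\delta) = \frac{4\rho^2}{\delta \lambda n}$. For the noise bound, the key observation is that the noise distribution in Algorithm~\ref{alg:output-perturbation-strongly-convex} is precisely the output-perturbation noise of Theorem~\ref{fact:laplacian-dp} calibrated to sensitivity $\Delta_2 = \frac{4\rho}{\lambda n}$: indeed $L_S$ is $\lambda$-strongly convex (being an average of $\lambda$-strongly convex instance losses), so Theorem~\ref{thm:l2-stability-Lip} applied with $\varrho \equiv 0$ shows that $\erm$ is $\frac{4\rho}{\lambda n}$ $\ell_2$-RO stable, which is exactly the $\ell_2$-sensitivity of the query $S \mapsto \erm(S)$. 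Plugging this sensitivity into Theorem~\ref{cor:norm-noise-vector} yields that with probability at least $1-\gamma$,
\begin{align*}
  \|\kappa\|_2 \le \frac{d\ln(d/\gamma)\,\Delta_2}{\varepsilon_p}
  = \frac{4\rho\, d\ln(d/\gamma)}{\lambda n \varepsilon_p} =: \kappa(n,\gamma).
\end{align*}

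Combining these in Theorem~\ref{thm:generalization-error-machinery} gives, with probability at least $1-\delta-\gamma$,
\begin{align*}
  L_{\cal D}(w) - L_{\cal D}^*
  \le \frac{4\rho^2}{\delta \lambda n} + \rho\,\kappa(n,\gamma)
  = \frac{4\rho^2}{\delta \lambda n} + \frac{4\rho^2 d\ln(d/\gamma)}{\lambda n \varepsilon_p}.
\end{align*}
To obtain a single confidence parameter I would split the failure budget, e.g. take $\gamma = \delta$ (or $\delta/2$ for each event and absorb the resulting factor of two into the constant), so that both terms are of order $\frac{\rho^2 d\ln(d/\delta)}{\lambda n \delta \varepsilon_p}$ and the claimed big-$O$ bound follows. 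I do not expect a serious obstacle: the argument is essentially careful bookkeeping over results already in hand. The one point requiring attention is the accounting of the two independent failure events (the $\delta$ from the generalization bound on $\erm$ and the $\gamma$ from the tail bound on $\|\kappa\|_2$) together with the verification that the additive $\frac{4\rho^2}{\delta\lambda n}$ term is dominated by the noise term in the regime of interest (small $\varepsilon_p$ and $d \ge 1$), so that it can legitimately be folded into the stated big-$O$.
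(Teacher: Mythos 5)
Your proposal is correct and follows essentially the same route as the paper's own proof: both instantiate $A$ as the empirical risk minimizer, bound the noise norm via Theorem~\ref{thm:l2-stability-Lip} (sensitivity $4\rho/\lambda n$) together with Theorem~\ref{cor:norm-noise-vector}, invoke Theorem~\ref{thm:stochastic-strongly-convex-optimization} for the ERM learning rate, and split the failure probability as $\delta' = \gamma = \delta/2$ before absorbing both terms into the stated big-$O$. The only cosmetic difference is that you route the combination explicitly through Theorem~\ref{thm:generalization-error-machinery}, whereas the paper performs that same combination inline.
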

Basically, this theorem says that
if the instance loss function is strongly convex,
then with high probability, the generalization error of a hypothesis sampled
using our method is only roughly $1/n$-away from the ``optimal''
($L_{\cal D}^*$).
\vskip 10pt

\noindent\textbf{General Convex Loss}.
We now consider the general case where $\ell$ is only convex.
We have the following theorem on the generalization error,
which matches the bound obtained in~\cite{BST14} (Theorem F.3),
\begin{theorem}
  \label{thm:general-convex-dp-generalization}
  Consider a convex, $\rho$-Lipschitz learning problem that is also $R$-bounded.
  Let $\varepsilon_p > 0$ be a privacy parameter for differential privacy.
  Let $\widetilde{A}$ denote Algorithm~\ref{alg:output-perturbation-convex}.
  Then for any $\delta \in (0,1)$, with probability $1-\delta$
  over $S \sim {\cal D}^n$ and $w \sim \widetilde{A}(S)$,
  \begin{align*}
    L_{\cal D}(w) - L_{\cal D}^*
    \le O\left( \frac{\rho R\sqrt{d\ln(d/\delta)}}{\sqrt{n\delta\varepsilon_p}}
         \right).
  \end{align*}
\end{theorem}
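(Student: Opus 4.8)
The plan is to reduce the general convex case to the strongly convex result (Theorem~\ref{thm:strongly-convex-dp-generalization}) by absorbing the Tikhonov regularizer into the loss, and then to pay for that regularization through an explicit bias term which I balance against the noise by tuning $\lambda$. Concretely, following the reduction already used to justify Theorem~\ref{thm:convex-dp-alg}, I would set $\bar{\ell}(w,z) = \ell(w,z) + \frac{\lambda}{2}\|w\|^2$. Over the $R$-bounded set ${\cal H}$ this $\bar\ell$ is $\lambda$-strongly convex (Lemma~\ref{lemma:convex2strongly-convex}) and $(\rho+\lambda R)$-Lipschitz, and the deterministic rule $A(S) = \bar w = \arg\min_{\|w\|\le R}\bar{L}_S(w)$ computed by Algorithm~\ref{alg:output-perturbation-convex} is precisely the empirical risk minimizer of the regularized problem.

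First I would show that this regularized ERM is an agnostic learner for the \emph{original} problem, with a rate that carries a regularization bias. Applying Theorem~\ref{thm:stochastic-strongly-convex-optimization} to $\bar\ell$ bounds the excess regularized true loss, $\bar{L}_{\cal D}(\bar w) - \bar{L}_{\cal D}^* \le \frac{4(\rho+\lambda R)^2}{\delta\lambda n}$ with probability $1-\delta$. I then transfer this to the unregularized true loss: writing $w^* = \arg\min_{w\in{\cal H}}L_{\cal D}(w)$, the chain $L_{\cal D}(\bar w) \le \bar{L}_{\cal D}(\bar w) \le \bar{L}_{\cal D}^* + \frac{4(\rho+\lambda R)^2}{\delta\lambda n} \le \bar{L}_{\cal D}(w^*) + \frac{4(\rho+\lambda R)^2}{\delta\lambda n}$, combined with $\bar{L}_{\cal D}(w^*) = L_{\cal D}^* + \frac{\lambda}{2}\|w^*\|^2 \le L_{\cal D}^* + \frac{\lambda R^2}{2}$, yields a learning rate $\varepsilon(n,\delta) = \frac{\lambda R^2}{2} + \frac{4(\rho+\lambda R)^2}{\delta\lambda n}$ for $A$ measured against the unregularized optimum. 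The $\frac{\lambda R^2}{2}$ summand is the regularization bias and has no analogue in the strongly convex proof.

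Next I would control the noise. By Theorem~\ref{thm:l2-stability-Lip} applied to $\vartheta_S = \bar{L}_S$, the rule $A$ is $\frac{4(\rho+\lambda R)}{\lambda n}$ $\ell_2$-RO stable, which is exactly the sensitivity the noise of Algorithm~\ref{alg:output-perturbation-convex} is calibrated to; Theorem~\ref{cor:norm-noise-vector} then gives $\|w - A(S)\|_2 \le \kappa(n,\gamma) = \frac{4 d\ln(d/\gamma)(\rho+\lambda R)}{\lambda n\varepsilon_p}$ with probability $1-\gamma$. Crucially, the Lipschitz multiplier in Lemma~\ref{lemma:Lipschitz-condition-and-generalization-error} is that of the \emph{original} loss $\ell$ (since $L_{\cal D}$ is defined with $\ell$), so the noise contributes only $\rho\,\kappa(n,\gamma)$. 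Feeding the learning rate $\varepsilon(n,\delta)$ and $\kappa(n,\gamma)$ into Theorem~\ref{thm:generalization-error-machinery} gives, with probability $1-\delta-\gamma$, $L_{\cal D}(w) - L_{\cal D}^* \le \frac{\lambda R^2}{2} + \frac{4(\rho+\lambda R)^2}{\delta\lambda n} + \frac{4\rho d\ln(d/\gamma)(\rho+\lambda R)}{\lambda n\varepsilon_p}$.

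The final and most delicate step is the bias–variance tradeoff in $\lambda$: the first term grows linearly in $\lambda$ while the remaining two decay like $1/\lambda$ in the relevant regime $\lambda R \lesssim \rho$. Setting $\gamma = \delta$ and collecting the $1/\lambda$ coefficient as $C = O(\rho^2 d\ln(d/\delta)/(\delta n\varepsilon_p))$, I would balance $\frac{\lambda R^2}{2}$ against $C/\lambda$ by taking $\lambda = \Theta\!\left(\frac{\rho}{R}\sqrt{\frac{d\ln(d/\delta)}{\delta n\varepsilon_p}}\right)$, at which point both terms are $O\!\left(\rho R\sqrt{\frac{d\ln(d/\delta)}{\delta n\varepsilon_p}}\right)$, giving the claimed $O\!\left(\frac{\rho R\sqrt{d\ln(d/\delta)}}{\sqrt{n\delta\varepsilon_p}}\right)$ bound. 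I expect the main obstacle to be exactly this optimization: one must verify that at the chosen $\lambda$ the approximation $\rho+\lambda R = O(\rho)$ is valid (equivalently $n \gtrsim d\ln(d/\delta)/(\delta\varepsilon_p)$, which is the only regime where the bound is non-trivial) so that the $(\rho+\lambda R)$ factors do not spoil the scaling, and that folding the $\frac{4(\rho+\lambda R)^2}{\delta\lambda n}$ term into $C$ is a legitimate upper bound. This converts the $O(1/n)$ strongly convex guarantee into the $O(1/\sqrt{n})$ rate, which is the essential qualitative phenomenon of the general convex case.
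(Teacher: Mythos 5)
Your proposal is correct and follows essentially the same route as the paper's proof: reduce to the strongly convex case via the Tikhonov-regularized loss $\bar{\ell}(w,z)=\ell(w,z)+\tfrac{\lambda}{2}\|w\|^2$, pay the $\tfrac{\lambda R^2}{2}$ bias for switching between regularized and unregularized optima, bound the noise via the $\tfrac{4(\rho+\lambda R)}{\lambda n}$ sensitivity together with Theorem~\ref{cor:norm-noise-vector}, and balance bias against noise by choosing $\lambda = \Theta\bigl(\tfrac{\rho}{R}\sqrt{d\ln(d/\delta)/(n\delta\varepsilon_p)}\bigr)$. The only differences are cosmetic reorganizations: the paper invokes Theorem~\ref{thm:strongly-convex-dp-generalization} wholesale for $\bar{\ell}$ and transfers to the unregularized excess risk at the end (so its noise term carries $(\rho+\lambda R)^2$ where yours carries the slightly tighter $\rho(\rho+\lambda R)$), and it leaves implicit the regime condition $d\ln(d/\delta)\lesssim n\delta\varepsilon_p$ that you correctly flag as needed for the final absorption of terms.
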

Note that for convex loss functions (instead of strongly convex ones),
we are only roughly $1/\sqrt{n}$-away from the optimal.

We notice that regularization plays a vital role in this theoretical guarantee.
Indeed, the key to prove Theorem~\ref{thm:general-convex-dp-generalization} is
the use of the Tikhonov regularizer ($\lambda\|w\|^2/2$) to gain {\em stability}.
By contrast, regularization is only used as a heuristic
in the analysis of the functional mechanism~\cite{ZZXYW12}.



\section{Applications}
\label{sec:applications}
\cbstart
In this section we give concrete applications of the theory developed so far.
Along the way, we compare our method with previous ones.
To begin with, Chaudhuri et al.~\cite{CMS11} have given
an output perturbation mechanism that is essentially the same as our
Algorithm~\ref{alg:output-perturbation-convex}.
Moreover, they give an {\em objective perturbation} algorithm.
The main difference is that the noise is added to the objective function,
instead of the output. Algorithm~\ref{alg:objective-perturbation-convex}
describes their objective perturbation algorithm.
We refer interested readers to their work for more details.

\begingroup
\removelatexerror
\begin{algorithm}[H]
  \KwIn{Privacy budget: $\varepsilon_p > 0$.
    Parameters $\lambda, c > 0$.
    Boundedness parameter: $R > 0$.
    Training data: $S = \{(x_i,y_i)\}_{i=1}^n$.}
  \KwOut{A hypothesis $w$.}
  \BlankLine
  Let $\varepsilon_p'
  = \varepsilon_p - \log(1 + \frac{2c}{n\lambda} + \frac{c^2}{n^2\lambda^2})$.\\
  If $\varepsilon_p' > 0$, then $\Delta = 0$,
  else $\Delta = \frac{c}{n(e^{\varepsilon p/4})-1} - \lambda$,
  and $\varepsilon_p'=\varepsilon_p/2$.\\
  Put $\beta = \varepsilon_p/2$ and draw noise vector $\nu$
  such that $p(\nu) \propto e^{-\beta\|\nu\|}.$\\
  Output
  $\arg\min_{\|w\| \le R}
  \Big\{
  L_S(w) + \frac{\nu^Tw}{n} + \frac{\lambda\|w\|^2}{2}
  \Big\}.$
  \caption{Objective Perturbation of Chaudhuri et al.}
  \label{alg:objective-perturbation-convex}
\end{algorithm}
\endgroup

Our work differs in analysis and applicability.
Specifically, to get their claimed generalization bounds
Chaudhuri et al. requires differentiability and the Lipschitz {\em derivative}
for their output perturbation, and requires twice differentiability
with bounded derivatives for their objective perturbation
(see Theorem 6 and Theorem 9 in~\cite{CMS11}).
Our analysis removes all these differentiability conditions,
and demonstrates that {\em output perturbation} works well
for {\em all} convex-Lipschitz-bounded learning problems.
This is by far the largest class of convex learning problems
that are known to be learnable (see~\cite{UML}).
Due to lack of space, in the following we only give two important examples,
Support Vector Machine and Logistic Regression.
As we will see, our analysis enables us to train an
SVM {\em without} approximating the loss function
using a smooth loss function that gives higher or equal loss point-wisely.
\vskip 5pt
\noindent\textbf{Support Vector Machine (SVM)}.
In SVM the instance-loss function is the so called {\em hinge loss function}.
Specifically, given hypothesis space ${\cal H} \subseteq \Real^d$,
feature space $X \subseteq \Real^d$, and output space $Y = \zo$,
then for $(x,y) \in X \times Y$, the hinge loss is defined as
\[ \ell^{\rm hinge}(w, x, y) = \max\big\{0, y(1-\langle w, x \rangle)\big\}.\]
In the common setting where $X$ is scaled to lie within the unit ball,
it is straightforward to verify that $\ell^{\rm hinge}$ is convex $1$-Lipschitz.
Therefore our Algorithm~\ref{alg:output-perturbation-convex}
and Theorem~\ref{thm:general-convex-dp-generalization} directly apply to give
differential privacy with a small generalization error. We note that, however,
hinge loss is {\em not} differentiable. Therefore Chaudhuri et al.'s analysis
for output perturbation does not directly apply.
Indeed, in order to use their method, they need to use a
smooth loss function to approximate the hinge loss
by giving higher or equal loss point-wisely.
\vskip 5pt
\noindent\textbf{Logistic Regression}. In the Logistic Regression
we have ${\cal H} \subseteq \Real^d$, $X \subseteq \Real^d$ and $Y = \{ \pm 1 \}$.
Then for $y \in Y$ and $x \in X$, the loss of $w$ on $(x,y)$ is defined to be
\[ \ell^{\rm log}(w, x, y) = \log\big( 1 + e^{-y\langle w, x \rangle} \big). \]
It is not hard to verify that $\ell^{\rm log}$ is convex and differentiable
with bounded derivatives. Therefore both our analysis and Chaudhuri et al.'s
analysis directly apply. In particular, Chaudhuri et al.~\cite{CMS11}
has done extensive experiments evaluating output perturbation
for Logistic Regression on standard datasets.

\subsection{On Setting Parameters, Generalization Error and Implementation}
\label{sec:theory-to-algorithms:parameter-tuning}
We note that, very recently, Bassily et al.~\cite{BST14} have given better
{\em training error} under the {\em same} technical conditions as ours
(the generalization error is the same).
Next we compare with theirs.

On one hand, we observe that~\cite{BST14, CMS11}, as well as our work,
are all based on regularized learning,
and the regularization parameter $\lambda$ is left unspecified.
Because the requirement of differential privacy, one cannot na\"{i}vely run
the regularized learning on multiple training sets and then pick the best parameters:
Our analysis of DP is with respect to the training data only;
however, if the parameters we pick depend on the data,
then they may carry sensitive information of individuals in the data,
which renders the subsequent DP analysis invalid.
A standard way to tackle this issue is to employ private parameter tuning,
which acts as a {\em wrapper procedure},
that invokes the regularized learning procedure as a black box,
and pick parameters differentially privately for the regularized learning.
In this paper we also consider a {\em data independent} approach based on
our analysis.

On the other hand, for fixed $\lambda$ the regularized learning procedure
of Bassily et al.~\cite{BST14} achieves better training error
(the generalization error remains the same).
However, their mechanisms are substantially more complicated.
More specifically, their mechanism requires a sophisticated sampling procedure
that runs in time $O(n^3)$, where $n$ is the training set size.
For reasonably large data sets this can be prohibitive.
An even more serious concern is some practical challenges
in implementing this sampling procedure,
which dates back to work of Applegate and Kannan~\cite{AK91}.
Indeed, we noticed that similar concern in implementing
related sampling procedures have also been raised by Hardt~\cite{Hardt14}.
Given these concerns,
we will not empirically compare with Bassily et al.'s mechanisms.
\cbend

\subsection{Linear Regression}
\label{sec:theory-to-algorithms:linear-regression}
In the rest of this paper, we specialize the general algorithms
to linear regression, which is a common regression task and is the model which
Fredrikson et al.~\cite{FLJLPR14} constructed from a medical data set.
We will evaluate the effectiveness of our private linear regression mechanisms
presented here in an empirical study later.

Technically, let $d$ be the number of features,
${\cal H} \subseteq \Real^d$, $X \subseteq \Real^d$ and $Y \subseteq \Real$.
Linear regression uses the so called ``\emph{squared loss}'' 
instance-loss function, $\ell(w, (x, y)) = (\langle w, x \rangle - y)^2$.
Note that $\ell$ is convex in $w$ for any fixed $(x,y)$.
Unfortunately, it is known in the learning theory that the learning task
with respect to $\ell$ (that is, to minimize $\Exp_{\cal D}[\ell(w, {\cal D})]$
over $w \in {\cal H}$) is not learnable over $\Real^d$ (see Example 12.8,
Shalev-Shwartz and Ben-David~\cite{UML}.).
To gain theoretical tractability, we thus restrict
the hypothesis space $\cal H$ to be $R$-bounded.
With this restriction, $\ell$ is then $(2R+2)$-Lipschitz over $\cal H$.
Thus we can specialize Algorithm~\ref{alg:output-perturbation-convex} to
get $\outalg$ described in Algorithm~\ref{alg:output-perturbation-linear-regression}.

\begin{algorithm}[!htp]
  \KwIn{Privacy budget: $\varepsilon_p > 0$.
    Regularization parameter: $\lambda > 0$.
    Boundedness parameter: $R > 0$.
    Training data: $S = \{(x_i,y_i)\}_{i=1}^n$,
    where  $\|x_i\| \le 1$ and $|y_i| \le 1$.}
  \KwOut{A hypothesis $w$.}
  \BlankLine
  Solve the regularized least mean square problem
  $$\Bar{w}
  = \arg\min_{\|w\| \le R} \left(L_S(w) + \frac{\lambda}{2}\|w\|^2\right).$$\\
  Draw a noise vector $\kappa \in \Real^d$ according to a distribution
  with the following density function $p(\kappa)$,
  $$p(\kappa) \propto
  \exp\left(-\frac{\lambda n \varepsilon\|\kappa\|_2}{12R+8}\right).$$\\
  Output $\Bar{w} + \kappa$.
  \caption{\outalg}
  \label{alg:output-perturbation-linear-regression}
\end{algorithm}

Note that $\outalg$ has two unspecified parameters $\lambda$ and $R$.
We use two approaches to tackle the issue of picking these parameter privately:
The first is to use a computation based on our analysis to pick parameters
in a data {\em independent} manner.
The second approach uses a private parameter tuning algorithm
(Section 6, \cite{CMS11}).
We are ready to describe all the private linear regression mechanisms
considered in this paper and explain on how to choose parameters.
In the following, similar to previous work~\cite{CMS11,ZZXYW12},
we assume that the training data is scaled so that they lie in the unit ball.
\vskip 5pt

\noindent\textbf{Oracle Output Perturbation}.
This is the \emph{idealized} version of the output perturbation mechanism,
where we exhaustively search for the best parameters using the training data
and then use them in as if they were ``constants''.
We consider this variant because it shows the best possible result one can
obtain using output perturbation.

We search $R$ exhaustively within the space $\{0.25, 0.5, 1, 2\}$.
To search for a good $\lambda$, we consider an arithmetic progression
starting at initial value $(d/n\varepsilon'_p)^{1/2}$
and ending at $(d/n\varepsilon^*_p)^{1/2}$,
where $\varepsilon'_p=100$ and $\varepsilon^*_p=0.1$.
The start and end values are determined by a computation based on our proof
of Theorem~\ref{thm:general-convex-dp-generalization}.
For our dataset on warfarin dosage, $n \approx 3000$ and $d = 14$.
This gives the approximate range $[0.007, 0.2]$.
We use a slightly larger range $[0.001, 0.5]$.
\vskip 5pt

\noindent\textbf{Data-Independent Output Perturbation}.
Based on the proof of Theorem~\ref{thm:general-convex-dp-generalization},
We determine $\lambda, R$ through a computation so that
they are independent of the training data.

We put $R=1$. This is because we scale the data to be within the unit ball,
and the coefficients of the linear regression model are all small
when trained over the scaled data.
For $\lambda$, we pick it using our theoretical analysis for output perturbation.
Specifically, Theorem~\ref{thm:general-convex-dp-generalization} indicates
that $\lambda$ is approximately $\sqrt{d/n\varepsilon_p}$.
\vskip 5pt

\cbstart
\noindent\textbf{Privately-Tuned Output/Objective Perturbation}.
Chaudhuri et al.~\cite{CMS11} use private parameter tuning to
pick data-dependent parameters for both of their
output/objective perturbation.
Fortunately, the private tuning algorithm only makes black-box use of
the output/objective perturbation.
Algorithm~\ref{alg:tuning-algorithm} describes the private-tuning framework.
Given $\varepsilon_p$, the tuning algorithm ensures $\varepsilon_p$-DP.
\cbend
\begin{algorithm}[!htp]
  \KwIn{Privacy budget: $\varepsilon_p > 0$.\\
    Training data: $S = \{(x_i,y_i)\}_{i=1}^n$,
    where $\|x_i\| \le 1$, $|y_i| \le 1$.\\
    Parameter space: ${\cal R} = \{R_1,\dots,R_l\}$,
    where $\max_{1 \le i \le l}R_i \le R$,
    $\Lambda = \{\lambda_1,\dots,\lambda_k\}$.
    Let ${\cal R} \times \Lambda = \{(R_i,\lambda_i)\}_{i=1}^m$}
  \KwOut{A hypothesis $w$.}
  \BlankLine
  Divide the training data into $m+1$ chunks, $S_1,\dots,S_m,S'$.
  $S_1,\dots,S_m$ are used for training, and $S'$ is used for validation.\\

  For each $i=1,2,\dots,m$,
  apply a privacy-preserving algorithm to train $w_i$
  (for example output or objective perturbation) with parameters
  $\varepsilon_p, \lambda_i, R_i, S_i$.
  Evaluate $w_i$ on $S'$ to get utility $u_{S'}(w_i) = -L_{S'}(w_i)$.\\

  Pick a $w_i$ in $\{w_1,\dots,w_m\}$
  using the exponential mechanism with privacy budget $\varepsilon_p$ and
  utility function $u_{S'}(w_i)$.
  Note that the sensitivity of $u$ is
  $$\Delta(u)
  = \max_{w \in {\cal H}}\max_{S,i,z'}|u_S(w) - u_{S^{(i)}}(w)| \le R^2.$$
  Thus this amounts to sampling $w_i$ with probability
  \begin{align*}
    p(w_i) \propto \exp\left(-\frac{L_{S'}(w_i)\varepsilon_p}{2R^2}\right).
  \end{align*}
  \caption{Parameter Tuning Algorithm:
    it accesses a privacy-preserving training algorithm as a black box.}
  \label{alg:tuning-algorithm}
\end{algorithm}

Let $\cal R$ be a set of $l$ choices of $R$ and $\Lambda$ be
a set of $k$ choices of $\lambda$.
Let $m = l \cdot k$, and suppose the pairs of parameters can be listed as
$\{(R_1,\lambda_1),\dots,(R_m,\lambda_m)\}$.
The more settings of parameters to try, larger the $m$ is,
and so we have smaller chunks for training, thus more entropy
in the probability in picking the final hypothesis.
Therefore, we want to have a small set of good parameters.
We put $\Lambda$ as a geometric progression of ratio $2$ that starts with
$0.002$ and ends with $0.256 = 0.002 \cdot 2^7$, so $k = 8$.
For $R$, note that while we may want larger radius for the purpose of learning,
the probability we assign to each $w_i$ decays quadratically in $R$.
We thus try two sets ${\cal R}_1 = \{.25, .5, 1\}$,
and ${\cal R}_2 = \{.5, 1, 2\}$, so $l=3$.
For ${\cal R}_1$, the probability sampling $w_i$ becomes
$\exp(-L_{S'}(w_i)\varepsilon_p/2)$.
For ${\cal R}_2$, this is $\exp(-L_{S'}(w_i)\varepsilon_p/8)$.
\vskip 5pt


\section{Empirical Study}
\label{sec:evaluation}

In this section we evaluate our mechanisms on the warfarin-dosing dataset used
by Fredrikson et al.~\cite{FLJLPR14}. We are interested in the following
questions:
\begin{enumerate}
\item How does the accuracy of models produced using our mechanisms compare to
the functional mechanism?
\item What is the impact of improved DP-utility tradeoff on MI attacks?
\end{enumerate}
In summary, we found that: \textit{1)} our mechanisms provide better accuracy
than the functional mechanism, for a given $\epsilon$ setting, and \textit{2)}
this improvement in model accuracy makes MI attacks more effective.
Specifically, we obtain accurate models with $\varepsilon$-DP even for
$\varepsilon = 0.1$, whereas the functional mechanism does not provide
comparable models until $\varepsilon \ge 5$.

Section~\ref{sec:evaluation:setup} describes our experimental methodology, and
Section~\ref{sec:evaluation:tradeoff-dp-utility} presents our results on the
relationship between $\varepsilon$ and model accuracy. In
Section~\ref{sec:evaluation:model-inversion}, we present empirical evidence
and a formal analysis of our observation that better differentially-private
mechanisms lead to more effective MI attacks.

\subsection{Methodology}
\label{sec:evaluation:setup}
We used the same methodology, data, and training/validation split as the
experiments discussed in Fredrikson et al.~\cite{FLJLPR14}. The data was
collected by the International Warfarin Pharmacogenetics Consortium (IWPC),
and contains information pertaining to the age, height, weight, race, partial
medical history, and two genomic SNPs: VKORC1 and CYP2C9. The outcome variable
corresponds to the stable therepeutic dose of warfarin, defined as the
steady-state dose that led to stable anticoagulation levels. At the time of
collection, this was the most expansive database of information relevant to
pharmacogenomic warfarin dosing. For more information about the data and how
it was pre-processed, we refer the reader to the original IWPC
paper~\cite{IWPC09} and the paper by Fredrikson et al.~\cite{FLJLPR14}.

Our experiments examine linear warfarin dosing models trained on this data,
either using differentially-private regression mechanisms or standard linear
regression, and seek to characterize the relationship between the following
quantities:
\vskip 5pt
\noindent\textbf{Privacy budget}. This is given by the parameter $\epsilon$, and
controls the ``strength'' of the guarantee conferred by differential privacy.
Smaller privacy budgets correspond to stronger guarantees.
\vskip 5pt

\noindent\textbf{Model accuracy/utility}. We measure this as the average
mean-squared error between the model's predicted warfarin dose, and the ground
truth given in the validation set. To obtain a more realistic measure of
utility, we use the approach described by Fredrikson et al.~\cite{FLJLPR14} to
simulate the physiological response of patients who are are given warfarin
dosages according to the model, and estimate the likelihood of adverse health
events (stroke, hemorrhage, and mortality).
\vskip 5pt

\noindent\textbf{Model invertability}. This is measured by the success rate of the
model inversion algorithm described by Fredrikson et al.~\cite{FLJLPR14}, when
predicting VKORC1 for each patient in the validation set. We focused on VKORC1
rather than CYP2C9, as Fredrikson et al. showed that the results for this SNP
more clearly illustrate the trade-off between privacy and utility.
\vskip 5pt

The rest of this section describes our results for each quantity.

\subsection{Model Accuracy}
Figure~\ref{fig:out-fun-mse} compares Functional Mechanism with all the
private output perturbation mechanisms,
which includes Tuned Output Perturbation (Out),
Data-Independent Output Perturbation (Data Independent),
and Oracle Output Perturbation (Oracle).
We also include the model accuracy of the non-private algorithm (Non-Private).
We observe that all the output perturbation give much better model accuracy
compared to the functional mechanism, especially for small $\varepsilon$.
Specifically, Tuned Output Perturbation obtains accurate models at
$\varepsilon = 0.3$.
Data-Independent and the Oracle Mechanisms give much the same model accuracy,
and provide accurate models even for $\varepsilon = 0.1$.
In particular, the accuracy is very close to the models
produced by the  non-private algorithm.
\label{sec:evaluation:tradeoff-dp-utility}
\begin{figure}[htp]
  \centering
  \includegraphics[width=0.6\columnwidth]{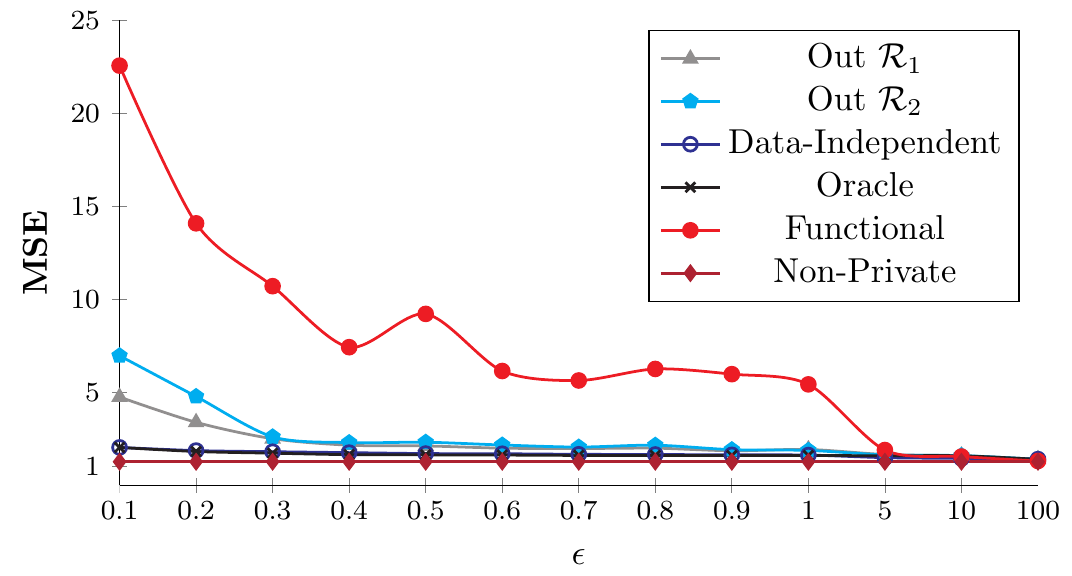}
  \vspace{-1\baselineskip}
  \caption{
    Model Accuracy: compare the functional mechanism with
    our output perturbation method, as well as the non-private mechanism.
    ${\cal R}_1$ stands for the Private-Tuned mechanism where we try
    the bounded hypothesis space with radius $R$ in $\{.25, .5, 1\}$.
    ${\cal R}_2$ stands for the same mechanism with radius in $\{.5, 1, 2\}$.
    Out indicates Privately Tuned Output Perturbation methods.
    Oracle (Data-Independent) stands for the
    Oracle (Data-Independent) Output Perturbation.
  }
  \label{fig:out-fun-mse}
\end{figure}

Figure~\ref{fig:di-oracle-mse} further compares the performance of
Data-Independent and Oracle mechanisms and demonstrate their closeness.
For the entire parameter range considered,
the maximum MSE gap we observed is only $0.1$.
\begin{figure}[htp]
  \centering
  \includegraphics[width=0.6\columnwidth]{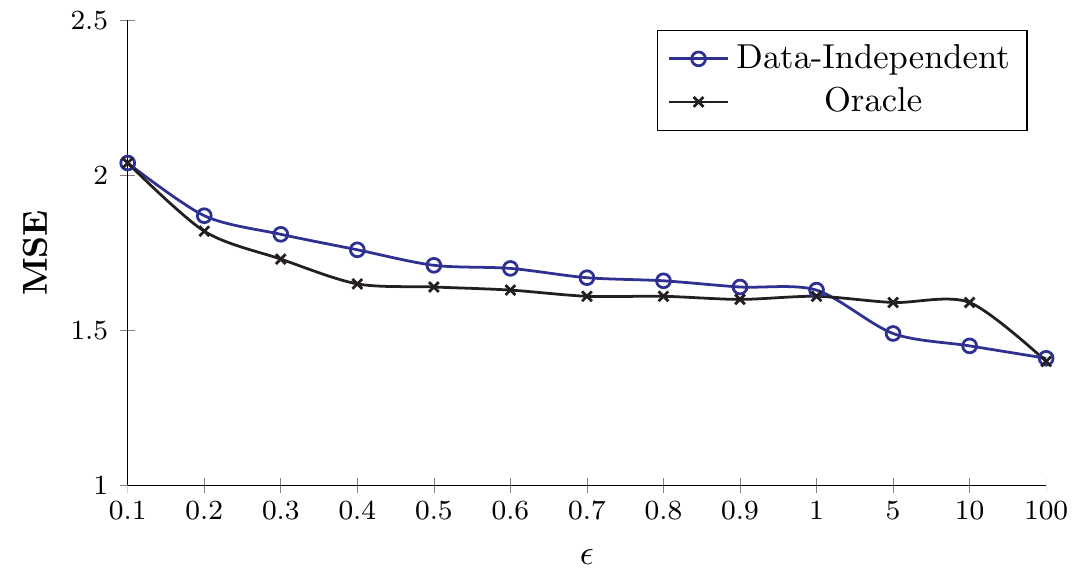}
  \vspace{-1\baselineskip}
  \caption{
    Model Accuracy: Further comparison between Data-Independent Output Perturbation
    and Oracle Output Perturbation. The maximum MSE gap we observed is only $0.1$.
  }
  \label{fig:di-oracle-mse}
\end{figure}
Figure~\ref{fig:risk} further compares the risk of mortality,
hemorrhage, and stroke using Functional Mechanism,
Tuned Output Perturbation and Data-Independent Output Perturbation.
unsurprisingly, Data-Independent Output Perturbation gives the best result,
and in particular much smaller risk than Functional Mechanism.
\begin{figure*}[htp]
  \centering
  \subfloat[Mortality]{
    \includegraphics[width=.32\columnwidth]{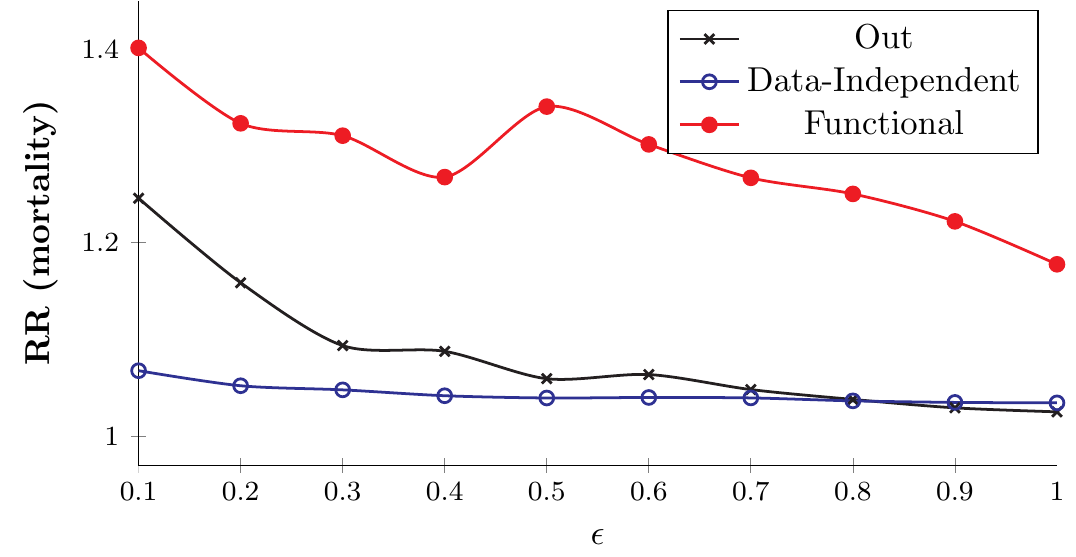}}
  \subfloat[Bleeding]{
    \includegraphics[width=.32\columnwidth]{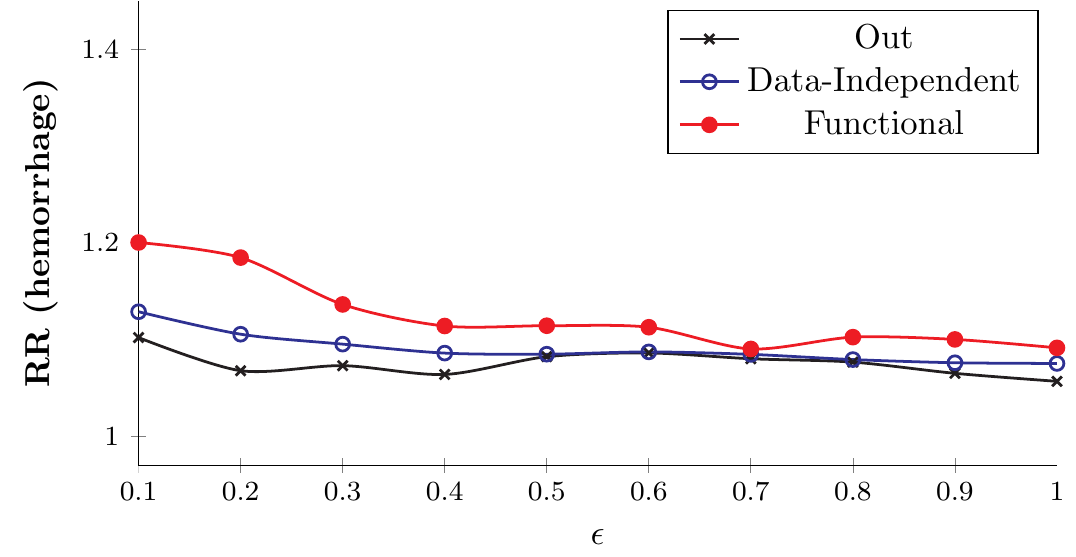}}
  \subfloat[Stroke]{
    \includegraphics[width=.32\columnwidth]{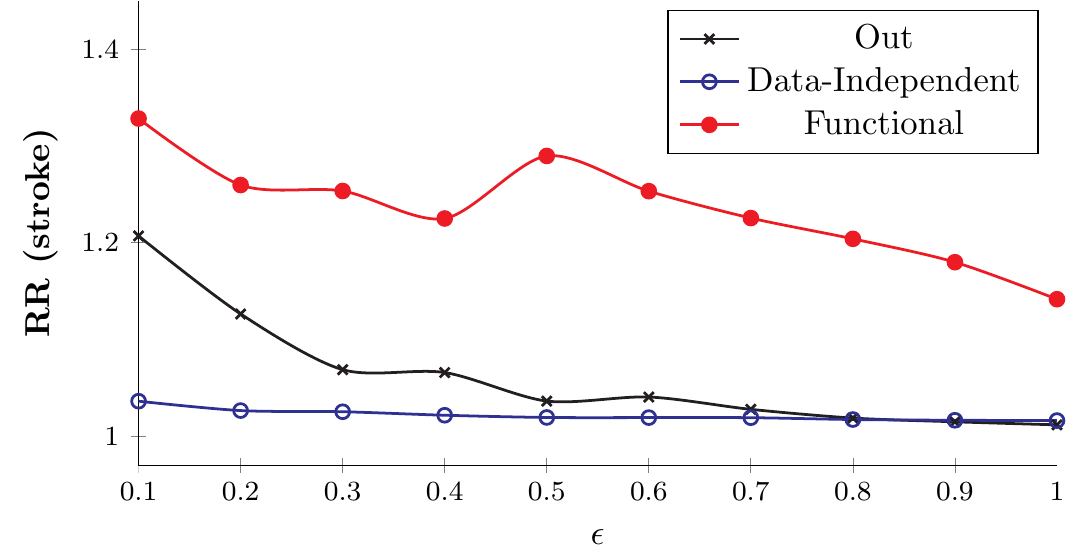}}
  \vspace{-1\baselineskip}
  \caption{Model Risk: We estimate the risk of mortality, hemorrhage,
    and stroke using the approach described by Fredrikson et al.}
  \label{fig:risk}
\end{figure*}
\vskip 5pt
\cbstart
\noindent\textbf{Comparison with Other Private Mechanisms}.
We also compare model accuracy of our method with two other private algorithms.
\vskip 5pt
\noindent\textbf{Projected Histogram}.
We notice that in the previous work of Fredrikson et al.~\cite{FLJLPR14},
they have implemented private projected histogram mechanism and compared it
with the functional mechanism on linear regression.
Specifically, as Figure 6 (Section 5.2) of their paper shows,
the projected-histogram algorithm indeed has similar model accuracy
compared to the functional mechanism,
which is much worse than our Data-Independent algorithm
(of which the accuracy is close to that of the non-private algorithm).
\vskip 5pt
\noindent\textbf{Objective Perturbation}.
We have so far mainly compared with the function mechanism,
which we believe is the most important task,
because the functional mechanism has become the recognized state of the art
for training regression models and has been adopted by many research teams,
including~\cite{APW15,WSW15,WYZ12}. On the other hand,
we notice that under very restricted conditions,
it is known that Chaudhuri et al.'s objective perturbation method~\cite{CMS11}
can provide very good model accuracy.
Interestingly, because we impose boundedness condition for linear regression,
the technical conditions of objective perturbation are satisfied.
Therefore we also compare it with our method.
Encouragingly and perhaps somewhat surprisingly,
while our method is much more widely applicable
(see discussion in Section~\ref{sec:applications}),
our experiments show that our general algorithm performs
as well as objective perturbation.
Specifically, Figure~\ref{fig:obj-di-oracle-mse} compares the model accuracy of
these three methods, and the maximal gap we observed is only $0.1$!
\begin{figure}[!htp]
  \centering
  \includegraphics[width=0.6\columnwidth]{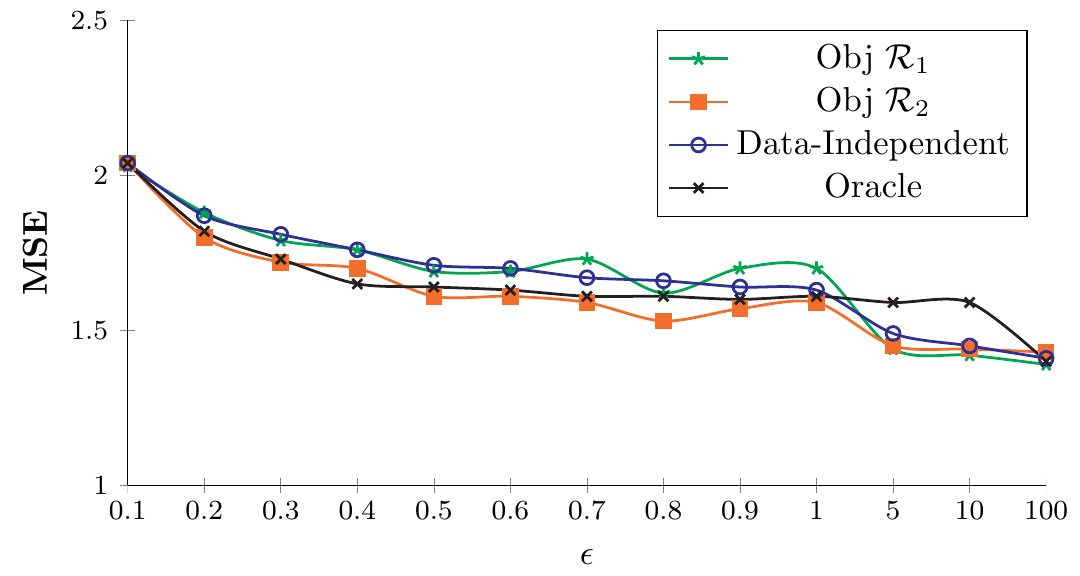}
  \vspace{-1\baselineskip}
  \caption{
    Model Accuracy: Objective Perturbation,
    Data-Independent Output Perturbation and Oracle Output Perturbation.
    ${\cal R}_1$ stands for the Private-Tuned mechanism where we try
    the bounded hypothesis space with radius $R$ in $\{.25, .5, 1\}$.
    ${\cal R}_2$ stands for the same mechanism with radius in $\{.5, 1, 2\}$.
  }
  \label{fig:obj-di-oracle-mse}
\end{figure}
\cbend

\subsection{Model Inversion}
\label{sec:evaluation:model-inversion}
In this section, we examine the impact of the increased model utility on
\emph{model inversion} (MI), a privacy attack first raised by Fredrikson et
al.~\cite{FLJLPR14}. Improving model utility for a given $\varepsilon$ is a
theme shared by nearly all previous work on differential privacy. This is a
sensible goal, because utility has no direct bearing on the privacy guarantee
provided by \emph{differential-privacy}---two models can differ significantly
on the level of utility the provide, while still conferring the same level of
differential privacy. However, we show that MI is orthogonal to differential
privacy in this sense, because the improved utility offered by our mechanisms
leads to more successful MI attacks.
\vskip 5pt
\noindent\textbf{Better DP mechanisms, more effective MI attacks.}
Figure~\ref{fig:out-fun-mi} compares MI accuracy of all the private mechanisms.
For all these mechanisms, we see that mechanisms with better DP-utility tradeoff
also has higher MI accuracy.
Specifically,  For Oracle Output Perturbation at $\varepsilon=0.2$,
we see a significant increase in MI accuracy:
$45\%$ for Oracle compared to $35\%$ for the functional mechanism.
Meanwhile, the utility of our mechanism is much better,
with mean squared error $1.82$ compared to the functional mechanism's $22.57$.
This phenomenon holds for larger $\varepsilon$,
although the magnitude of the differences gradually shrink.

\begin{figure}[htp]
  \centering
  \includegraphics[width=0.6\columnwidth]{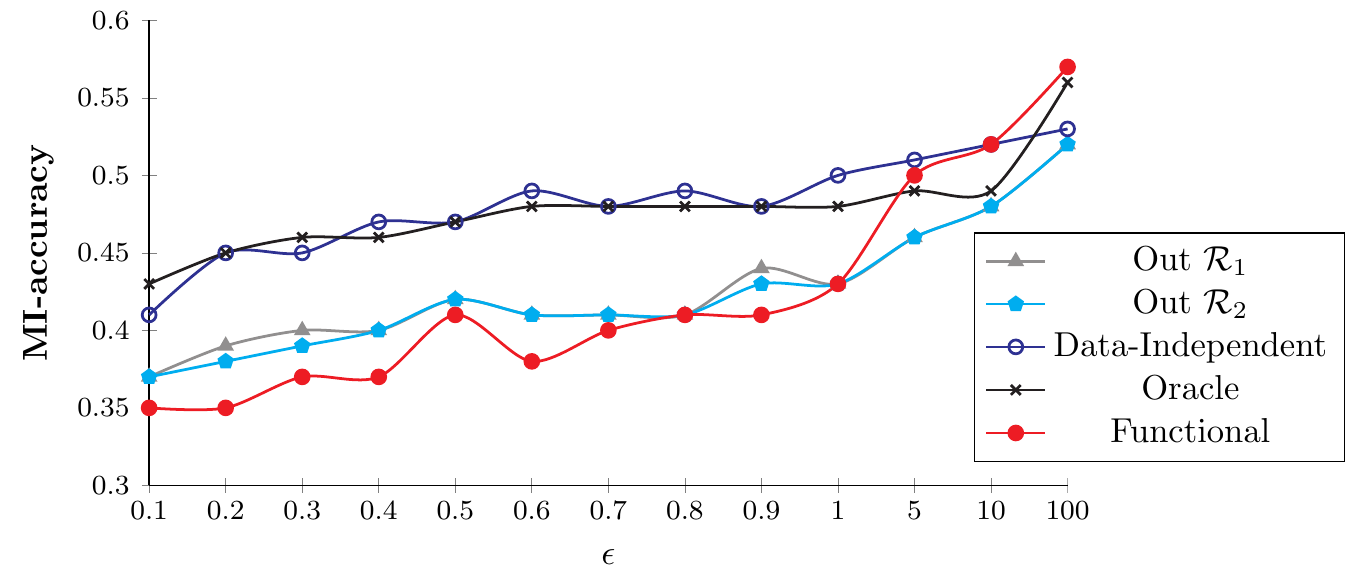}
  \vspace{-1\baselineskip}
  \caption{
    MI attack: Output Perturbation vs. Functional Mechanism.
    The x-axis is $\varepsilon$. The y-axis is the accuracy of MI attack.
    MI attacks become more effective for all three variants. 
    For Oracle Output Perturbation and Functional Mechanism at $\varepsilon=.2$,
    the MI attack accuracy for the oracle mechanism is $45\%$,
    while is only $35\%$ for the functional mechanism.
    Data-Independent and Oracle Output Perturbation are similar.
  }
  \label{fig:out-fun-mi}
\end{figure}

Figure~\ref{fig:di-oracle-mi} further demonstrates that,
similar to their model accuracy,
Data-Independent and Oracle Output Perturbation
have very similar behavior on model invertibility
(note that they both provide similar model accuracy
that is better than other methods).

\begin{figure}[htp]
  \centering
  \includegraphics[width=0.6\columnwidth]{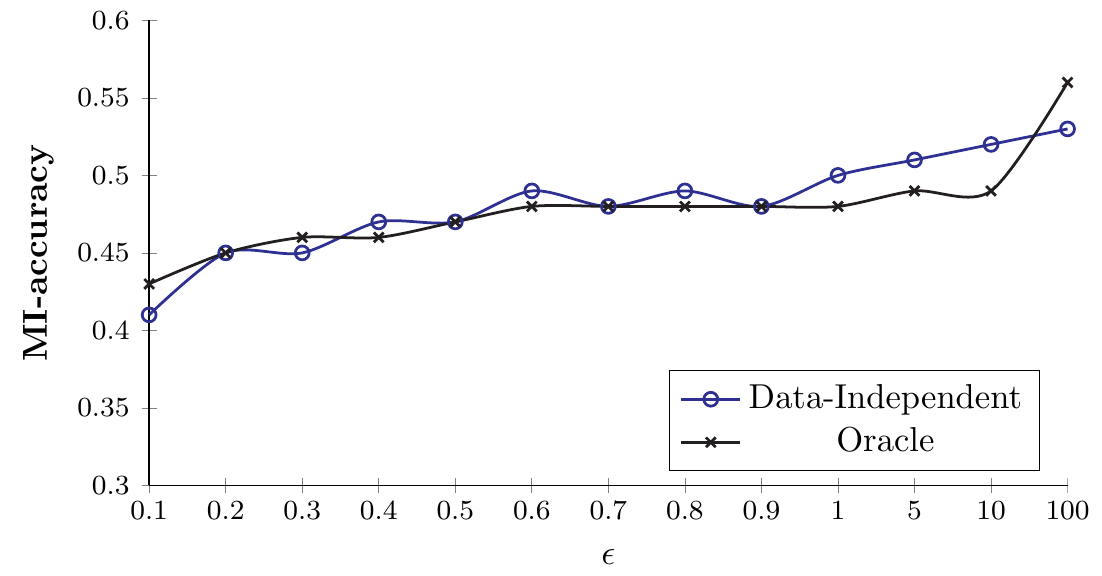}
  \vspace{-1\baselineskip}
  \caption{
    Model Invertibility:
    Data-Independent and Oracle Output Perturbation.
    The model inversion accuracy of these two algorithms are
    also very close with each other.
  }
  \label{fig:di-oracle-mi}
\end{figure}
\vskip 5pt
\cbstart
\noindent\textbf{Comparison with Other Private Mechanisms}.
For MI we also compare our method with projected-histogram algorithm
and objective perturbation. Specifically,
We notice that in the previous work of Fredrikson et al.~\cite{FLJLPR14},
they have demonstrated that projected-histogram algorithm actually
leaks more information and produces models with higher MI accuracy
than the functional mechanism (see the discussion under the head
``\textbf{Private Histograms vs. Linear Regression}''
in Section 4 of their paper).
For Objective Perturbation we find again that its MI accuracy is almost
the same as our Data-Independent and Oracle Output Perturbation.
This is not surprising because they have very close model accuracy.
\cbend

\vskip 5pt\noindent\textbf{For a fixed mechanism,
better utility gives more effective MI attacks.}
For a fixed mechanism, we demonstrate that
MI attacks get more effective as utility increases
with the availability of more training data.
\begin{figure*}[htp]
  \centering
  \subfloat[Model Accuracy]{
    \includegraphics[width=.4\columnwidth]{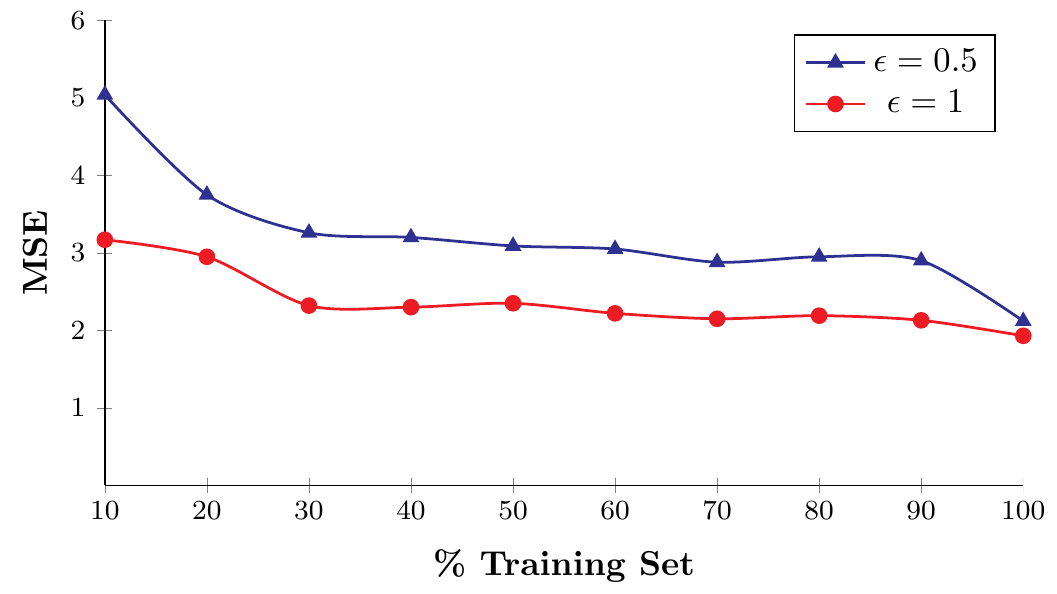}
  }
  \subfloat[Model Invertibility]{
    \includegraphics[width=.4\columnwidth]{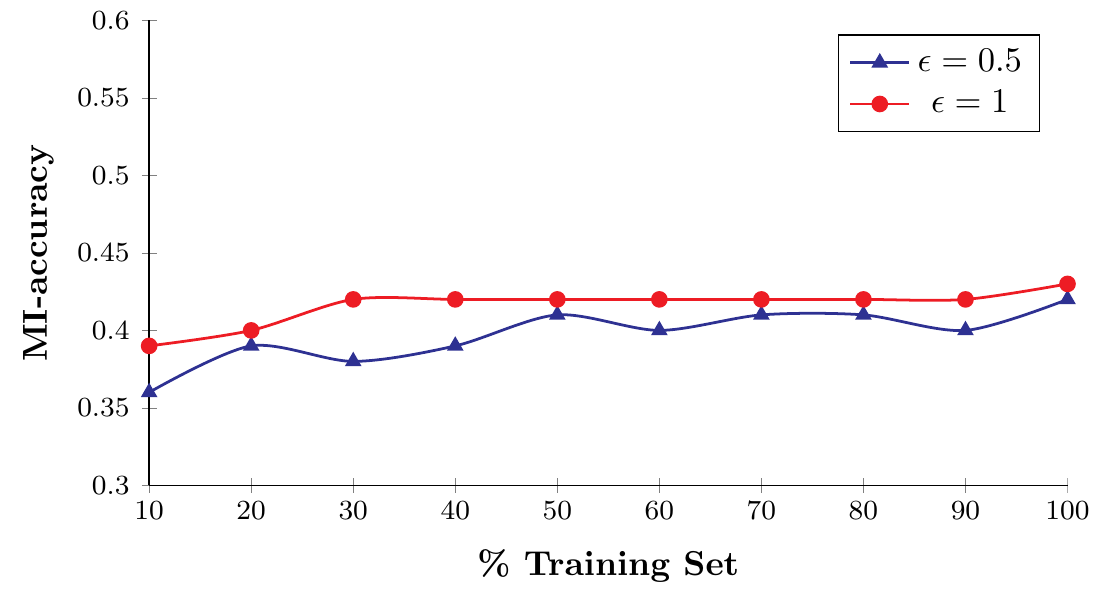}
  }
  \vspace{-1\baselineskip}
  \caption{Privately-Tuned Output Perturbation with increasingly larger training sets.
    The x-axis is the size of training chunks, and the y-axis is MSE.
    We use the following experimental method:
    The entire data set gets randomly permuted in the beginning.
    The data is split again into 25 chunks, with the first 24 for training
    and the last one for validation. For each of the training chunks,
    a fraction of the training items is sampled at rate $r$.
    We then train the model using the data with Tuned Output Perturbation
    and evaluate their utility and MI attack accuracy.
    This experiment is repeated $100$ time for $r=10\%, 20\%, \dots, 100\%$.}
  \label{fig:tuned-sizes-mse-mi}
\end{figure*}
Figure~\ref{fig:tuned-sizes-mse-mi} shows the results for Tuned Output Perturbation.
For $\varepsilon_p=0.5$, the mean square error drops from $5.05$
(with $r=10\%$ of available training data) to $2.9$ ($r=90\%$), while the MI
attack accuracy increases from $36\%$ ($r=10\%$) to $40\%$ ($r=90\%$).
\vskip 5pt
\noindent\textbf{A Theoretical Analysis}.
\noindentAt first sight the above two phenomena may seem somewhat peculiar.
If MI attack is considered as a privacy concern,
then we have empirically observed that some privacy concern
becomes ``worse'', while differential privacy gets ``better.''

There is no contradiction here. Indeed,
it suffices to observe that DP is a property of the learning ``process'',
while MI attack is on the ``result'' of the process.
It is thus valid that the process satisfies a strong privacy guarantee,
while the result has some other concerns.
In the following, we give a ``lower bound'' result, which shows that,
as long as the optimal solution of the learning problem is susceptible to
MI attacks, improving DP/utility tradeoff will 
give effective MI attacks ``eventually.''

The intuition is as follows: Suppose that MI attack will be effective at
a hypothesis $w^*$ such that $L_{\cal D}(w^*) = L^*_{\cal D}$. That is,
the MI attack is effective at a hypothesis we want to converge to.
Now, suppose that the effectiveness of MI attack grows ``monotonically''
as we converge to $w^*$. Then, as long as the result of a learning algorithm
converges to $w^*$, it will gradually give more effective MI attacks.
We now give more details of this argument.

\vskip 5pt
\noindent\textbf{\em Assumptions}.
We make the following three assumptions on learning and MI attack:
(i) The utility of a hypothesis is measured by its generalization error.
(ii) Suppose that for $w^*$, $L_{\cal D}(w^*) = L_{\cal D}^*$,
and MI attack is effective for $w^*$.
(iii) As $|L_{\cal D}(w) - L_{\cal D}^*|$ gets smaller,
the MI attack for $w$ becomes more effective.

These assumptions are natural.
For (i), almost all previous work measures utility this way.
For (ii), since the ultimate goal of learning is to converge
to the best possible hypothesis, assuming MI attack will be effective
for such $w^*$ is natural. Finally, closeness in $L_{\cal D}$ indicates
that $w$ and $w^*$ are close in terms of their ``functioning as a model.''
Thus (iii) holds intuitively.
\vskip 5pt
\noindent\textbf{\em Better DP Mechanisms, More Effective MI attack}.
For any $n$, let $S_n$ denote a training set of size $n$.
Suppose that $A'$ is an $\varepsilon_p$-differentially private mechanism
with better privacy-utility tradeoff than the output perturbation mechanism $A$.
Thus with high probability for $w \sim A'(S_n)$,
$L_{\cal D}(w)$ is closer to $L_{\cal D}^*$ than that of $w$
sampled from $A(S_n)$. 
Combined with (ii) and (iii), we have that MI attack is more effective for
$w \sim A'(S_n)$.

One may note that the mechanism $A'$ could be any differentially private
mechanism, as long as it has better DP-utility tradeoff than \emph{output perturbation}.
For example, one can use the objective perturbation mechanism in~\cite{CMS11}
for generalized linear models, or exponential sampling based mechanisms
in~\cite{BST14} if minimizing training error is the goal.
\vskip 5pt
\noindent\textbf{\em For a Fixed Mechanism,
Better DP-Utility tradeoff gives More Effective MI attacks.}.
Theorem~\ref{thm:l2-stability-Lip} and our discussion at the end of
Section~\ref{sec:norm-stability} have indicated that as $n$ tends to the
infinity, the amount of noise injected for $\varepsilon_p$-differential privacy
vanishes to zero. Further, Theorem~\ref{thm:strongly-convex-dp-generalization}
and~\ref{thm:general-convex-dp-generalization} imply that for any convex
Lipschitz learning problem, the \emph{output perturbation} mechanism converges to
the optimal hypothesis $w^*$. Thus we are in the situation that as $n$ increases,
the output model has less noise yet it is closer to $w^*$. Therefore,
by assumptions (ii) and (iii) MI attack is more effective for larger $n$.
\vskip 5pt
\noindent\textbf{\em A Bayesian View Point}.
In a paper by Kasiviswanathan and Smith \cite{KS08},
the authors give a Bayesian interpretation of the semantics of differential
privacy. Informally speaking, given an arbitrary prior distribution over a
collection of databases, what differential privacy guarantees is that
\emph{two posteriors} obtained in two worlds of neighboring databases are
indistinguishable with each other.

What about the difference between \emph{prior and posterior}?
The same paper~\cite{KS08}, and indeed the original paper by
Dwork and Naor\cite{DN08}, have pointed out
that it is impossible to bound the difference between prior and posterior under
\emph{arbitrary background knowledge}.
Essentially, as long as the published information is ``useful,''
there exists some background knowledge that allows an adversary to learn and
significantly modify his/her prior.

Unfortunately, in MI attacks, some moderate background information
allows significant change of one's prior.
Therefore, while worsening MI attack is certainly not what one intended,
it is also of no surprise that better differentially
private mechanisms do not give better resilience against MI attack.



\section{Related Work}
\label{sec:related}
Differential privacy was proposed in the seminal work of Dwork, McSherry, Nissim and Smith~\cite{DMNS06}
and has become the de-facto standard for privacy.
Our setting -- learning a model differentially privately
-- was initiated in the work by Chaudhuri, Monteleoni and Sarwate~\cite{CMS11}.
Since the work of~\cite{CMS11},
a large body of work~\cite{BST14,CMS11,DJW13,JT13} has been devoted to this line,
culminating in a recent result by Bassily, Smith and Thakurta~\cite{BST14},
which obtains tight error bounds for general convex-Lipschitz learning problems and generalized linear models.
To this end, our work makes the underlying theme behind these works more explicit, namely,
the connection between differential privacy and stability theory in machine learning.

On the other hand, the application of these results in practice seems slow-paced.
The recent work by Fredrikson et al.~\cite{FLJLPR14} highlights two unfortunate facts on
training a differentially private pharmacogenetic model using the functional mechanism proposed by Zhang et al.~\cite{ZZXYW12}.
The first issue is the low model accuracy even for weak differential privacy guarantees.
For this problem, we show, both theoretically and empirically,
that the simple output perturbation can provide a much better DP-utility tradeoff
than the functional mechanism.

The second issue is about effectiveness of MI attacks.
MI attack is a new kind of privacy attack that was first described
in the same paper by Fredrikson et al.~\cite{FLJLPR14},
where the authors demonstrate that a DP mechanism can only prevent MI attack
with small privacy parameters.
We show that the privacy concern of MI attack is essentially orthogonal
to the concern of differential privacy.
To the best of our knowledge, this is the first work that establishes such a connection.


\section{Conclusion}
\label{sec:conclusion}
In this paper, we considered issues raised by Fredrikson et al.
about training differentially private regression models using functional mechanism.
Through an explicit connection between differential privacy and stable learning theory,
we gave a straightforward analysis showing that output perturbation
can be made to obtain, both theoretically and empirically,
substantially better privacy/utility tradeoff than the functional mechanism.
Since output mechanism is simple to implement,
this indicates that our method is potentially widely applicable in practice.
We went on to apply our theory to the same data set as used by Fredrikson et al.,
and the empirical results are encouraging.

We also studied model inversion attack,
a privacy attack raised in the same paper by Fredrikson et al.
We observed empirically that better differentially private mechanisms
lead to more effective model-inversion attacks.
We analyzed theoretically why this is the case.
We regard this result as a warning that care may be taken
when learning things from a data set,
even when strong differential privacy is guaranteed.


{\small
  \bibliographystyle{abbrv}
  \bibliography{paper}

\begin{thebibliography}{10}

\bibitem{APW15}
Y.~Aono, T.~H. L.~T. Phong, and L.~Wang.
\newblock Fast and secure linear regression and biometric authentication with
  security update.
\newblock 2015.

\bibitem{AK91}
D.~Applegate and R.~Kannan.
\newblock Sampling and integration of near log-concave functions.
\newblock In {\em Proceedings of the 23rd Annual {ACM} Symposium on Theory of
  Computing, May 5-8, 1991, New Orleans, Louisiana, {USA}}, pages 156--163,
  1991.

\bibitem{BST14}
R.~Bassily, A.~Smith, and A.~Thakurta.
\newblock Private empirical risk minimization: Efficient algorithms and tight
  error bounds.
\newblock In {\em 55th {IEEE} Annual Symposium on Foundations of Computer
  Science, {FOCS} 2014, Philadelphia, PA, USA, October 18-21, 2014}, pages
  464--473, 2014.

\bibitem{BV04}
S.~Boyd and L.~Vandenberghe.
\newblock {\em Convex Optimization}.
\newblock Cambridge University Press, New York, NY, USA, 2004.

\bibitem{CMS11}
K.~Chaudhuri, C.~Monteleoni, and A.~D. Sarwate.
\newblock Differentially private empirical risk minimization.
\newblock {\em Journal of Machine Learning Research}, 12:1069--1109, 2011.

\bibitem{CPPM12}
F.~J. Diaz and J.~Yeh, Hung-Wen de~Leon.
\newblock Role of statistical random-effects linear models in personalized
  medicine.
\newblock {\em Current Pharmacogenomics and Personalized Medicine},
  10(1):22--32, 2012.

\bibitem{DJW13}
J.~C. Duchi, M.~I. Jordan, and M.~J. Wainwright.
\newblock Local privacy and statistical minimax rates.
\newblock In {\em 54th Annual {IEEE} Symposium on Foundations of Computer
  Science, {FOCS} 2013, 26-29 October, 2013, Berkeley, CA, {USA}}, pages
  429--438, 2013.

\bibitem{DMNS06}
C.~Dwork, F.~McSherry, K.~Nissim, and A.~Smith.
\newblock Calibrating noise to sensitivity in private data analysis.
\newblock In {\em TCC}, pages 265--284, 2006.

\bibitem{DN08}
C.~Dwork and M.~Naor.
\newblock On the difficulties of disclosure prevention in statistical databases
  or the case for differential privacy.
\newblock {\em Journal of Privacy and Confidentiality}, 2(1):8, 2008.

\bibitem{DR14}
C.~Dwork and A.~Roth.
\newblock The algorithmic foundations of differential privacy.
\newblock {\em Foundations and Trends in Theoretical Computer Science},
  9(3-4):211--407, 2014.

\bibitem{FLJLPR14}
M.~Fredrikson, E.~Lantz, S.~Jha, S.~Lin, D.~Page, and T.~Ristenpart.
\newblock Privacy in pharmacogenetics: An end-to-end case study of personalized
  warfarin dosing.
\newblock In {\em Proceedings of the 23rd {USENIX} Security Symposium, San
  Diego, CA, USA, August 20-22, 2014.}, pages 17--32, 2014.

\bibitem{Hardt14}
M.~Hardt.
\newblock Towards practicing differential privacy.
\newblock
  \url{http://blog.mrtz.org/2015/03/13/practicing-differential-privacy.html},
  2014.

\bibitem{H92}
D.~Haussler.
\newblock Decision theoretic generalizations of the {PAC} model for neural net
  and other learning applications.
\newblock {\em Inf. Comput.}, 100(1):78--150, 1992.

\bibitem{IWPC09}
{International Warfarin Pharmacogenetic Consortium}.
\newblock Estimation of the warfarin dose with clinical and pharmacogenetic
  data.
\newblock {\em New England Journal of Medicine}, 360(8):753--764, 2009.

\bibitem{JT13}
P.~Jain and A.~Thakurta.
\newblock Differentially private learning with kernels.
\newblock In {\em Proceedings of the 30th International Conference on Machine
  Learning, {ICML} 2013, Atlanta, GA, USA, 16-21 June 2013}, pages 118--126,
  2013.

\bibitem{JYC15}
Z.~Jorgensen, T.~Yu, and G.~Cormode.
\newblock Conservative or liberal? personalized differential privacy.
\newblock In {\em 31st {IEEE} International Conference on Data Engineering,
  {ICDE} 2015, Seoul, South Korea, April 13-17, 2015}, pages 1023--1034, 2015.

\bibitem{KS08}
S.~P. Kasiviswanathan and A.~Smith.
\newblock A note on differential privacy: Defining resistance to arbitrary side
  information.
\newblock {\em CoRR}, abs/0803.3946, 2008.

\bibitem{KM11}
D.~Kifer and A.~Machanavajjhala.
\newblock No free lunch in data privacy.
\newblock In {\em SIGMOD Conference}, pages 193--204, 2011.

\bibitem{LQSC12}
N.~Li, W.~H. Qardaji, D.~Su, and J.~Cao.
\newblock Privbasis: Frequent itemset mining with differential privacy.
\newblock {\em {PVLDB}}, 5(11):1340--1351, 2012.

\bibitem{LO11}
Y.~Lindell and E.~Omri.
\newblock A practical application of differential privacy to personalized
  online advertising.
\newblock {\em {IACR} Cryptology ePrint Archive}, 2011:152, 2011.

\bibitem{RAFDA11}
I.~McKeague and M.~Qian.
\newblock Sparse functional linear regression with applications to personalized
  medicine.
\newblock In F.~Ferraty, editor, {\em Recent Advances in Functional Data
  Analysis and Related Topics}, Contributions to Statistics, pages 213--218.
  Physica-Verlag HD, 2011.

\bibitem{MT07}
F.~McSherry and K.~Talwar.
\newblock Mechanism design via differential privacy.
\newblock In {\em 48th Annual {IEEE} Symposium on Foundations of Computer
  Science {(FOCS} 2007), October 20-23, 2007, Providence, RI, USA,
  Proceedings}, pages 94--103, 2007.

\bibitem{RAWHPS10}
J.~Reed, A.~J. Aviv, D.~Wagner, A.~Haeberlen, B.~C. Pierce, and J.~M. Smith.
\newblock Differential privacy for collaborative security.
\newblock In {\em Proceedings of the Third European Workshop on System
  Security, {EUROSEC} 2010, Paris, France, April 13, 2010}, pages 1--7, 2010.

\bibitem{UML}
S.~Shalev-Shwartz and S.~Ben-David.
\newblock {\em Understanding Machine Learning: From Theory to Algorithms}.
\newblock Cambridge University Press, 2014.

\bibitem{SSSSS09}
S.~Shalev{-}Shwartz, O.~Shamir, N.~Srebro, and K.~Sridharan.
\newblock Stochastic convex optimization.
\newblock In {\em {COLT} 2009 - The 22nd Conference on Learning Theory,
  Montreal, Quebec, Canada, June 18-21, 2009}, 2009.

\bibitem{SSSSS10}
S.~Shalev-Shwartz, O.~Shamir, N.~Srebro, and K.~Sridharan.
\newblock Learnability, stability and uniform convergence.
\newblock {\em The Journal of Machine Learning Research}, 11:2635--2670, 2010.

\bibitem{Valiant84}
L.~G. Valiant.
\newblock A theory of the learnable.
\newblock {\em Commun. {ACM}}, 27(11):1134--1142, 1984.

\bibitem{Vapnik98}
V.~Vapnik.
\newblock {\em Statistical Learning Theory}.
\newblock Wiley, 1998.

\bibitem{WSW15}
Y.~Wang, C.~Si, and X.~Wu.
\newblock Regression model fitting under differential privacy and model
  inversion attack.
\newblock In {\em Proceedings of the Twenty-Fourth International Joint
  Conference on Artificial Intelligence, {IJCAI} 2015, Buenos Aires, Argentina,
  July 25-31, 2015}, pages 1003--1009, 2015.

\bibitem{WYZ12}
M.~Winslett, Y.~Yang, and Z.~Zhang.
\newblock Demonstration of damson: Differential privacy for analysis of large
  data.
\newblock In {\em 18th {IEEE} International Conference on Parallel and
  Distributed Systems, {ICPADS} 2012, Singapore, December 17-19, 2012}, pages
  840--844, 2012.

\bibitem{XZXYYW13}
J.~Xu, Z.~Zhang, X.~Xiao, Y.~Yang, G.~Yu, and M.~Winslett.
\newblock Differentially private histogram publication.
\newblock {\em {VLDB} J.}, 22(6):797--822, 2013.

\bibitem{ZXYZW13}
J.~Zhang, X.~Xiao, Y.~Yang, Z.~Zhang, and M.~Winslett.
\newblock Privgene: differentially private model fitting using genetic
  algorithms.
\newblock In {\em Proceedings of the {ACM} {SIGMOD} International Conference on
  Management of Data, {SIGMOD} 2013, New York, NY, USA, June 22-27, 2013},
  pages 665--676, 2013.

\bibitem{ZZXYW12}
J.~Zhang, Z.~Zhang, X.~Xiao, Y.~Yang, and M.~Winslett.
\newblock Functional mechanism: Regression analysis under differential privacy.
\newblock {\em {PVLDB}}, 5(11):1364--1375, 2012.

\end{thebibliography}
}

\appendix
\section{Proofs}

\subsection{Proof of Proposition~\ref{prop:DP-implies-strongly-uniform-RO-stability}}
\vspace{1.5em}
\begin{proof}
  Let $f_S(w)$ be the probability density function of $A(S)$.
  Due to $\varepsilon$-differential privacy, then for any $S,i,z'$,
  $f_S(w) \le e^{\varepsilon}f_{S^{(i)}}(w)$.
  Therefore
  \begin{align*}
    &\left| \Exp[\ell(\tildeA(S), \bar{z})]
            - \Exp[\ell(\tildeA(S^{(i)}), \bar{z})]\right| \\
    =& \left| \int \ell(w,\bar{z})\Big(f_S(w) - f_{S^{(i)}}(w)\Big)dw \right| \\
    \le& \int |\ell(w,\bar{z})| |f_S(w) - f_{S^{(i)}}(w)|dw\\
    \le& B\int |f_S(w) - f_{S^{(i)}}(w)|dw \tag{1}
  \end{align*}
  Note that $(e^{-\varepsilon}-1)f_{S^{(i)}}(w)
  \le f_S(w) - f_{S^{(i)}}(w)
  \le (e^\varepsilon-1)f_{S^{(i)}}(w)$,
  so $|f_S(w) - f_{S^{(i)}}(w)|
      \le \max\{1-e^{-\varepsilon}, e^\varepsilon-1\} f_{S^{(i)}}(w)$.
  Plugging into (1) gives the first claimed inequality.
  The second inequality follows from the observation that
  $e^\varepsilon + e^{-\varepsilon} \ge 2$.
\end{proof}

\subsection{Proof of Lemma~\ref{lemma:exchanging}}
\vspace{0.5em}
\begin{proof}
  By the definition of $\vartheta_S$,
  \begin{align*}
    &\vartheta_S(u) - \vartheta_S(v) \\
    =&\ \Big(L_{S^{(i)}}(u) + \varrho(u) - \frac{1}{n}\ell(u, z') + \frac{1}{n}\ell(u, z_i)\Big) \\
    &\ \ \ \ \ \ - \Big(L_{S^{(i)}}(v) + \varrho(v) - \frac{1}{n}\ell(v, z') + \frac{1}{n}\ell(v, z_i)\Big) \\
    =&\ \vartheta_{S^{(i)}}(u) - \vartheta_{S^{(i)}}(v) 
        + \frac{\ell(v, z') - \ell(u, z')}{n}
        + \frac{\ell(u, z_i) - \ell(v, z_i)}{n}.
  \end{align*}
  Since $u$ minimizes $\vartheta_{S^{(i)}}$ so $\vartheta_{S^{(i)}}(u) - \vartheta_{S^{(i)}}(v) \le 0$, so
  \begin{align*}
    \vartheta_S(u) - \vartheta_S(v)
    \le \frac{\ell(v, z') - \ell(u, z')}{n} + \frac{\ell(u, z_i) - \ell(v, z_i)}{n}
  \end{align*}
  completing the proof.
\end{proof}

\subsection{Proof of Lemma~\ref{lemma:bound-norm-by-bounding-objective}}
\vspace{0.5em}
\begin{proof}
  We have that for any $\alpha \in (0,1)$,
  \begin{align*}
    \vartheta_S(v) &\le \vartheta_S(\alpha v + (1-\alpha)u) \\
    &\le \alpha\vartheta_S(v) + (1-\alpha)\vartheta_S(u) - \frac{\lambda}{2}\alpha(1-\alpha)\|v-u\|^2
  \end{align*}
  where the first inequality is because $v$ is the minimizer of $\vartheta_S$
  and the second inequality is by the definition of $\lambda$-strong convexity.
  By elementary algebra, this give that $\frac{\lambda}{2}\alpha\|v-u\|^2 \le \vartheta_S(u) - \vartheta_S(v)$.
  Tending $\alpha$ to $1$ gives the claim.
\end{proof}

\subsection{Proof of Theorem~\ref{thm:l2-stability-Lip}}
\vspace{0.5em}
\begin{proof}
  Let $u = A(S^{(i)})$, $v = A(S)$. By Lemma~\ref{lemma:bound-norm-by-bounding-objective},
  \begin{align*}
    \frac{\lambda}{2} \| u - v\|^2 \le \vartheta_S(u) - \vartheta_S(v) \tag{1}
  \end{align*}
  By Lemma~\ref{lemma:exchanging},
  \begin{align*}
    \vartheta_S(u) - \vartheta_S(v)
    \le \frac{\ell(v, z') - \ell(u, z')}{n} 
        + \frac{\ell(u, z_i) - \ell(v, z_i)}{n} \tag{2}
  \end{align*}
  Now because $\ell(\cdot, z)$ is $\rho$-Lipschitz, we have that
  $\ell(v, z') - \ell(u, z') \le \rho\|v - u\|$, and
  $\ell(u, z_i) - \ell(v, z_i) \le \rho\|u - v\|$.
  Plugging these two inequalities to (2), we have that
  $\vartheta_S(u) - \vartheta_S(v) \le \frac{2\rho}{n}\|u - v\|$.
  Plugging this to (1) and rearranging completes the proof.
\end{proof}

\subsection{Proof of Lemma~\ref{lemma:Lipschitz-condition-and-generalization-error}}
\vspace{0.5em}
\begin{proof}
  We have
  \begin{align*}
    | L_{\cal D}(w) - L_{\cal D}(A(S))|
    &= |\Exp_{z \sim {\cal D}}[\ell(w, z) - \ell(A(S), z)]| \\
    &\le \Exp_{z \sim {\cal D}}[|\ell(w, z) - \ell(A(S), z)|]
  \end{align*}
  For every $z \in Z$, $\ell(\cdot, z)$ is $\rho$-Lipschitz, so
  $|\ell(w, z) - \ell(A(S), z)| \le \rho\|w - A(S)\|_2$.
  Therefore
  \begin{align*}
    \Exp_{z \sim {\cal D}}[|\ell(w, z) - \ell(A(S), z)|] \le \rho\|w - A(S)\|_2.
  \end{align*}
  The proof is complete by observing that
  with probability at least $1-\gamma$ over $w \sim \widetilde{A}(S)$,
  $\|w - A(S)\|_2 \le \kappa(n,\gamma)$.
\end{proof}

\subsection{Proof of Theorem~\ref{thm:generalization-error-machinery}}
\vspace{0.5em}
\begin{proof}
  For any $S \sim {\cal D}^n$, we have that
  $L_{\cal D}(w) - L_{\cal D}^* 
   = \big(L_{\cal D}(w) - L_{\cal D}(A(S))\big)
     + \big(L_{\cal D}(A(S)) - L_{\cal D}^*\big)$.
  For $L_{\cal D}(A(S)) - L_{\cal D}^*$, we know that
  $\Pr_{S \sim {\cal D}^n}\big[ L_{\cal D}(A(S)) - L_{\cal D}^*
                                 > \varepsilon(n,\delta)
                           \big] < \delta$.
  Further, for every $S \sim {\cal D}^n$,
  from Lemma~\ref{lemma:Lipschitz-condition-and-generalization-error},
  $\Pr_{w \sim \widetilde{A}(S)}\big[ 
        L_{\cal D}(w) - L_{\cal D}(A(S))
         > \rho\kappa(n,\gamma) \big] < \gamma$.
  The proof is complete by a union bound.
\end{proof}

\subsection{Proof of Theorem~\ref{thm:strongly-convex-dp-generalization}}
\vspace{0.5em}
\begin{proof}
  Let $A$ denote the rule of empirical risk minimization,
  and $\widetilde{A}$ be its output-perturbation counter-part which ensures $\varepsilon_p$-differential privacy.
  Then by Theorem~\ref{thm:l2-stability-Lip},
  and corollary~\ref{cor:norm-noise-vector}, with probability at least $1-\gamma$ over $w \sim \widetilde{A}(S)$,
  $\|w - A(S)\|_2 \le \frac{4d\ln(d/\gamma)\rho}{\lambda n\varepsilon_p}$.

  Together with Theorem~\ref{thm:stochastic-strongly-convex-optimization},
  it follows that with probability at least $1-\delta'-\gamma$ over $S \sim {\cal D}^n$ and $w \sim \widetilde{A}(S)$,
  \begin{align*}
    L_{\cal D}(w) - L_{\cal D}^* \le \frac{4\rho^2}{\delta'\lambda n} + \frac{4d\ln(d/\gamma)\rho^2}{\lambda n\varepsilon_p}.
  \end{align*}
  Put $\delta' = \gamma = \delta/2$, thus with probability at least $1-\delta$,
  \begin{align*}
    L_{\cal D}(w) - L_{\cal D}^* \le \frac{4\rho^2}{\lambda n}\big(\frac{2}{\delta} + \frac{4d\ln(2d/\delta)}{\varepsilon_p}\big).
  \end{align*}
  Asymptotically this is $O\Big(\frac{\rho^2d\ln(d/\delta)}{\lambda n \delta\varepsilon_p}\Big)$.
  The proof is complete.
\end{proof}

\subsection{Proof of Theorem~\ref{thm:general-convex-dp-generalization}}
\vspace{0.5em}
\begin{proof}
  Let $\Bar{\ell}$ be defined as $\Bar{\ell}(w,z) = \ell(w,z) + \frac{\lambda}{2}\|w\|^2$.
  $\Bar{\ell}$ is $\lambda$-strongly convex and $(\rho + \lambda R)$-Lipschitz.
  Let $\Bar{L}_S$ and $\Bar{L}_{\cal D}$ be the empirical loss and true loss functions
  with respect to $\Bar{\ell}$. Note that for any $w \in {\cal H}$,
  $\Bar{L}_{\cal D}(w) = L_{\cal D}(w) + \frac{\lambda}{2}\|w\|^2$.

  By Theorem~\ref{thm:strongly-convex-dp-generalization},
  there is an $\varepsilon_p$-differentially private mechanism $\tildeA$ such that with probability
  $1-\delta$ over $S \sim {\cal D}^n$ and $w \sim \tildeA(S)$,
  \begin{align*}
    \Bar{L}_{\cal D}(w) - \Bar{L}_{\cal D}^*
    \le O\Big(\frac{(\rho + \lambda R)^2d\ln(d/\delta)}{\lambda n \delta\varepsilon_p}\Big) \tag{1}
  \end{align*}
  Let $\Bar{w} \in {\cal H}$ such that $\Bar{L}_{\cal D}(\Bar{w}) = \Bar{L}_{\cal D}^*$
  and $w^* \in {\cal H}$ such that $L_{\cal D}^* = L_{\cal D}(w^*)$.
  Because $\Bar{w}$ is the minimizer of $\Bar{L}_{\cal D}(\cdot)$, so
  \begin{align*}
    L_{\cal D}(w^*) + \frac{\lambda}{2}\|w^*\|^2 \ge L_{\cal D}(\Bar{w}) + \frac{\lambda}{2}\|\Bar{w}\|^2 \tag{2}
  \end{align*}
  Combining (1) and (2) we have
  \begin{align*}
    L_{\cal D}(w) - L_{\cal D}^* &\le O\Big(\frac{(\rho + \lambda R)^2d\ln(d/\delta)}{\lambda n \delta\varepsilon_p}\Big)
    + \frac{\lambda}{2}\left(\|w^*\|^2 - \|w\|^2\right) \\
    &\le O\Big(\frac{(\rho + \lambda R)^2d\ln(d/\delta)}{\lambda n \delta\varepsilon_p}\Big)
    + \frac{\lambda R^2}{2} \\
    &\le O\left(\frac{2\rho Rd\ln(d/\delta)}{n\delta\varepsilon_p} + \frac{\rho^2 d\ln(d/\delta)}{\lambda n\delta\varepsilon_p} + \lambda R^2\right)
  \end{align*}
  where the second inequality is because the hypothesis space is $R$-bounded.
  Putting $\lambda = \frac{\rho}{R}\sqrt{\frac{d\ln(d/\delta)}{n\delta\varepsilon_p}}$ gives the claimed bound.
\end{proof}

\subsection{Smooth Problems}
\label{appendix:smooth-convex-Lipschitz}

The following lemma bounds training error for smooth learning problems.
The lemma is first proved by Chaudhuri et al.~\cite{CMS11}
for generalized linear models.
\begin{lemma}
  \label{lemma:smooth-}
  Consider a learning problem $({\cal H}, Z, \ell)$.
  Suppose that $\ell$ is $\lambda$-strongly convex, $\rho$-Lipschitz,
  and $\beta$-smooth.
  Let $A$ be $\erm$ and $\tildeA$ be its output perturbation variant as described in
  Algorithm~\ref{alg:output-perturbation-strongly-convex}
  with privacy parameter $\varepsilon_p > 0$.
  Then for any training set $S \subseteq Z^n$,
  we have with probability at least $1-\gamma$ over $w \sim \tildeA(S)$,
  \begin{align*}
    L_S(w) - L_S(w^*) \le \beta\left(\frac{4d\ln(d/\gamma)\rho}{\lambda n \varepsilon_p}\right)^2.
  \end{align*}
  where $L_S(w^*) = \min_{w \in {\cal H}}L_S(w)$.
\end{lemma}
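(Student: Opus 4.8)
The plan is to observe that the perturbed rule $\tildeA$ simply returns $w = w^* + \kappa$, where $w^*=\Bar{w}$ is the empirical risk minimizer computed in the first line of Algorithm~\ref{alg:output-perturbation-strongly-convex} and $\kappa$ is the noise vector, and then to control $L_S(w) - L_S(w^*)$ entirely through the $\beta$-smoothness of $L_S$ together with the optimality of $w^*$. First I would record that $L_S(w) = \frac{1}{n}\sum_i \ell(w, z_i)$ inherits $\beta$-smoothness from $\ell$: the gradient of the average is the average of the gradients, and the triangle inequality preserves the $\beta$-Lipschitz bound on $\nabla L_S$.

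The quadratic upper bound implied by $\beta$-smoothness then gives, for any $u \in {\cal H}$,
\[
  L_S(u) \le L_S(w^*) + \langle \nabla L_S(w^*),\, u - w^* \rangle + \frac{\beta}{2}\|u - w^*\|_2^2 .
\]
The key step is that $w^*$ is the \emph{unconstrained} minimizer of the convex function $L_S$, so its first-order optimality condition reads $\nabla L_S(w^*) = 0$, which kills the linear term. Substituting $u = w = w^* + \kappa$ then yields $L_S(w) - L_S(w^*) \le \frac{\beta}{2}\|\kappa\|_2^2$. It remains to bound $\|\kappa\|_2$: matching the noise density of Algorithm~\ref{alg:output-perturbation-strongly-convex}, namely $p(\kappa) \propto \exp\!\big(-\lambda n \varepsilon_p \|\kappa\|_2 /(4\rho)\big)$, against the density of Theorem~\ref{fact:laplacian-dp} identifies the effective ratio $\Delta_2(q)/\varepsilon = 4\rho/(\lambda n \varepsilon_p)$. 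Theorem~\ref{cor:norm-noise-vector} then gives, with probability at least $1-\gamma$, the bound $\|\kappa\|_2 \le 4 d \ln(d/\gamma)\rho/(\lambda n \varepsilon_p)$, which is precisely the quantity squared in the statement. Plugging this in and loosening the constant $\tfrac{1}{2}$ to $1$ delivers the claimed inequality.

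The only delicate point is the vanishing-gradient step: it relies on $w^*$ being an interior (unconstrained) minimizer so that $\nabla L_S(w^*) = 0$, which is exactly the regime of Algorithm~\ref{alg:output-perturbation-strongly-convex}, where the minimization carries no boundedness constraint and $\lambda$-strong convexity makes $w^*$ unique. (Had we instead worked with the $R$-bounded constrained minimizer of Algorithm~\ref{alg:output-perturbation-convex}, the optimality condition would only give $\langle \nabla L_S(w^*),\, u - w^*\rangle \ge 0$, and one would have to argue the linear term has a favorable sign rather than vanishes.) Everything else is routine — the $\beta$-smoothness of the average, the descent-lemma upper bound, and the Gamma-tail estimate of Theorem~\ref{cor:norm-noise-vector}.
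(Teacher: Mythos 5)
Your proof is correct and takes essentially the same route as the paper's: both reduce the claim to a bound of the form $L_S(w) - L_S(w^*) \le c\,\beta\|w - w^*\|_2^2$ by combining the $\beta$-smoothness of $L_S$ with the first-order optimality condition $\nabla L_S(w^*) = 0$ at the unconstrained minimizer, and then control $\|w - w^*\|_2 = \|\kappa\|_2$ via the Gamma-tail bound of Theorem~\ref{cor:norm-noise-vector}. The only difference is how smoothness is exploited: the paper uses the Mean Value Theorem, Cauchy--Schwarz, and the Lipschitz property of $\nabla L_S$ to get $c = 1$, whereas your descent-lemma step yields the slightly sharper $c = 1/2$; either constant suffices for the stated inequality.
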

\begin{proof}
  Using Mean Value Theorem, we have that for some $\alpha \in (0,1)$,
  $L_S(w) - L_S(w^*) = \nabla L_S(\alpha w + (1-\alpha)w^*) (w - w^*)$,
  which is upper bounded by
  $\|\nabla L_S(\alpha w + (1-\alpha)w^*)\| \|w - w^*\|$ by
  the Cauchy-Schwarz inequality.

  Note that $w^*$ achieves the minimum, so $\nabla L_S(w^*) = 0$. Thus
  $\|\nabla L_S(\alpha w + (1-\alpha)w^*)\|
  = \| \nabla L_S(\alpha w + (1-\alpha)w^*) - \nabla L_S(w^*) \|$,
  which is upper bounded by $\beta \|\alpha(w-w^*)\| \le \beta \|w-w^*\|$
  because $\nabla L_S$ is $\beta$-Lipschitz.
  Therefore we have that $L_S(w) - L_S(w^*) \le \beta \|w-w^*\|^2$.

  Finally, note that with probability at least $1-\gamma$
  over $w \sim \tildeA(S)$,
  $\|w - w^*\| \le \frac{4d\ln(d/\gamma)\rho}{\lambda n \varepsilon_p}$, so
  $L_S(w) - L_S(w^*)
  \le \beta\left(\frac{4d\ln(d/\gamma)\rho}{\lambda n \varepsilon_p}\right)^2$.
\end{proof}


\end{document}